\newtheorem{theorem}{Theorem}[section]
\newtheorem{lemma}[theorem]{Lemma}
\newtheorem{corollary}[theorem]{Corollary}
\newtheorem{definition}[theorem]{Definition}
\newtheorem{notation}[theorem]{Notation}
\newtheorem{remark}[theorem]{Remark}
\newcommand{\BigO}[1]{\ensuremath{\operatorname{O}\bigl(#1\bigr)}}
\begin{document}

\title{Practical Algorithms for Finding Extremal Sets}
\author{Martin Marinov\\ Nicholas Nash\\ David Gregg}
\maketitle

\begin{abstract}
The minimal sets within a collection of sets are defined as the ones which do not have a proper subset within the collection, and the maximal sets are the ones which do not have a proper superset within the collection. Identifying extremal sets is a fundamental problem with a wide-range of applications in SAT solvers, data-mining and social network analysis. In this paper, we present two novel improvements of the high-quality extremal set identification algorithm, \textit{AMS-Lex}, described by Bayardo and Panda. The first technique uses memoization to improve the execution time of the single-threaded variant of the AMS-Lex, whilst our second improvement uses parallel programming methods. In a subset of the presented experiments our memoized algorithm executes more than $400$ times faster than the highly efficient publicly available implementation of AMS-Lex. Moreover, we show that our modified algorithm's speedup is not bounded above by a constant and that it increases as the length of the common prefixes in successive input \textit{itemsets} increases. We provide experimental results using both real-world and synthetic data sets, and show our multi-threaded variant algorithm out-performing AMS-Lex by $3$ to $6$ times. We find that on synthetic input datasets when executed using $16$ CPU cores of a $32$-core machine, our multi-threaded program executes about as fast as the state of the art parallel GPU-based program using an NVIDIA GTX 580 graphics processing unit.
\end{abstract}

\section{Introduction}
\label{sec:intro}
\subsection{Motivation}
\label{sec:intro:back+motiv}
\noindent

The problem studied in this paper is that of finding the extremal sets within a dataset (family of sets) $D$. The extremal sets of $D$ are all the sets in $D$ that are maximal or minimal with respect to the partial order induced on $D$ by the subset relation.

Finding extremal sets is a fundamental problem and has many motivating applications. For example, large-scale SAT solvers use extremal set identification as an optimization step \cite{EenBiere05}. Extremal sets are also used for performing itemset support queries in data mining \cite{Mielikainen+06}, and social network analysis \cite{BayardoPanda11}, as well as in trajectory-based query algorithms with applications in surveillance \cite{Vieira+09}. Early theoretical algorithms were motivated by problems in propositional logic \cite{Pritchard91}.

We find our inspiration for working on the problem of finding extremal sets in the domain of searching for optimal depth sorting networks. Bundala~\textit{et al.}~\cite{BundalaCCSZ14_Optimal_Depth} describe a method (Lemma 2 in Section 3.2) for reducing the search space by considering only the output minimal networks within a collection of outputs of comparator networks of the same depth. Although, Bundala et al. present a stronger search space reduction technique --- output minimal up to permutation --- the problem of finding the minimal itemsets within a dataset is used a preliminary reduction step. The reason being that the minimal up to permutation problem is GI-Hard \cite{Marinov:II:GI-Hard} and the minimal itemset problem is known to be sub-quadratic \cite{Pritchard97}; hence, one would use the output of the latter as an input to the former. The algorithm described in this paper was initially developed to find such output-minimal networks (itemsets) within a dataset and hence, our discussion and examples focus on finding the minimal itemsets. However, as with Bayardo and Panda's state of the art practical algorithm AMS-Lex~\cite{BayardoPanda11}, our approach can be used to compute minimal or maximal itemsets.

and hence it is aimed at finding the minimal itemsets and not the maximal ones --- as per Bayardo and Panda's state of the art practical algorithm AMS-Lex~\cite{BayardoPanda11} for finding extremal (minimal or maximal) sets within a dataset.


In this paper, we present two optimization techniques that we apply to the AMS-Lex algorithm to achieve a faster execution time --- the first one uses memoization and the second one parallel programming techniques. The memoization technique is aimed at speeding up the AMS-Lex algorithm for finding the extremal itemsets within datasets containing a large number of common prefixes --- such as the ones found in the sorting networks domain. The presented parallel version of AMS-Lex is aimed at utilizing more of the CPU resources that are generally available in modern day computers. Using experimental evaluation we demonstrate the speedup achieved of both of them when compared to the highly efficient implementation of the AMS-Lex\footnote{Bayardo and Panda have made their implementation of the AMS-Lex algorithm publicly available at \url{https://code.google.com/p/google-extremal-sets/}} algorithm described by Bayardo~and~Panda~\cite{BayardoPanda11}. 

Given that AMS-Lex `is easily modified to find minimal itemsets'~\cite{BayardoPanda11}, without loss of generality, in this paper we focus on finding the minimal itemsets within an input dataset. We give full explanation on how exactly AMS-Lex is to be modified to find the minimal itemsets --- rather than the maximal itemsets \cite{BayardoPanda11} --- in Section~\ref{sec:background}. Furthermore, since our optimization techniques build on top of the existing algorithm (and implementation) of AMS-Lex the presented modification of AMS-Lex can be easily transformed to find the maximal itemsets.

\subsection{Related Work}
\label{sec:intro:related_work}
\noindent

We denote by $N$ the sum of the cardinalities of all the sets in the input dataset $D$, and informally refer to it as the size of the input. Although the algorithms for computing extremal sets are almost quadratic in $N$ in the worst case, due to the nature of datasets in applications, practical algorithms can operate efficiently for very large $N$ \cite{BayardoPanda11}. In this paper we provide experimental results for $N = 7.2 \times 10^8$.

Yellin~\cite{Yellin92} described algorithms for maintaining a dynamic family of sets, under insertion, deletion, intersection and subset query operations. He presents an output sensitive algorithm for identifying extremal sets after a sequence of $n$ operations that operates in $O(mn)$ time, where $m$ is the number of maximal sets. Note that $n$ is the sum of $N$ and the number of sets in the dataset, and hence $n > N$.

Early sub-quadratic time algorithms for finding extremal sets were described by Yellin and Jutla~\cite{YellinJutla93}, operating in $O(N^2 / \log N)$ expected time, and by Pritchard~\cite{Pritchard97} who provided a matching worst-case time bound. Pritchard~\cite{Pritchard91} described the first algorithms that required sub-quadratic space, providing algorithms requiring $O(N^2 / \log N)$ space.

Sheni and Evans~\cite{Shen96} also studied algorithms for maintaining a dynamic family of sets, operating in time $O(N^2 / \log^2 N)$ and requiring $O(N^2 / \log^3 N)$ space.
We do not study this dynamic version of the extremal set problem in this paper. 

Pritchard~\cite{Pritchard97} described the first algorithm to make use of a lexicographic ordering of the input sets. Among the practical algorithms for computing extremal sets is the highly efficient implementation of the AMS-Lex algorithm described by Bayardo and Panda~\cite{BayardoPanda11}. AMS-Lex is the state of the art practical algorithm for finding extremal sets that is designed to run on commodity CPUs. In this paper we give a detailed explanation of AMS-Lex in section~\ref{sec:background} as it is the basis point of our work.

Fort \textit{et al.}~\cite{Fort+13} described a highly parallel algorithm designed specifically for graphics processing units (GPUs). Fort~et~al. sihow that their parallel algorithm running on a GPU can outperform AMS-Lex running on single core of a conventional CPU. The single-threaded algorithm we described in this paper is targeted at running on an ordinary commodity CPU and therefore we compare its performance to the algorithm of Bayardo and Panda~\cite{BayardoPanda11}. In the experimental evaluation section~\ref{sec:experiments} we compare the execution time of our two new algorithms to Fort~\textit{et al.}~\cite{Fort+13}'s reported execution time by evaluating on synthetically generated datasets.

\subsection{Contributions}
\label{sec:intro:contributions}
The main contributions of this work can be summarized as:

\begin{itemize}
	\item A memoized version of AMS-Lex that takes advantage of common prefixes among itemsets.
    \item We outline a parallel modification of the AMS-Lex extremal sets algorithm.
    \item We present experimental results over both real-world and synthetic data for both the memoized and parallel modifications of the AMS-Lex extremal sets algorithms. We find that the speedup of the memoized algorithm increases as the length of the common prefixes of itemsets in the input dataset increases. Also that, the speedup of the parallel algorithm increases as the number of CPU cores used increase. 
\end{itemize}

\section{Background}
\label{sec:background}

Practical algorithms for computing the extremal sets of a dataset $D$ assume that the elements of $D$ are \textit{sets of items}, called \textit{itemsets}. Furthermore, these algorithms assume that there is an ordering on the \textit{itemsets} themselves. An input to an extremal set algorithm is then an ordered multiset of itemsets, referred to as a \textit{dataset} $D$.

The choice of the ordering on the itemsets gives rise to alternative algorithms for computing extremal sets. For example, if itemsets are ordered by cardinality then the simple observation that if itemset $a$ is a proper subset of itemset $b$ then the cardinality of $a$ is less than the cardinality of $b$ can be used to prune the search space. This gives rise to an algorithm referred to as \textit{AMS-Card} by Bayardo~and~Panda~\cite{BayardoPanda11}.

Pritchard~\cite{Pritchard97} exploited a lexicographic ordering of itemsets to obtain more efficient algorithms for identifying extremal sets. In particular he noted the following:

\begin{theorem}
\label{th:lex}
Let $a$ and $b$ be $itemsets$ such that $a \subset b$ then either $a$ is a $proper$ $prefix$ of $b$ or $a$ is lexicographically larger then $b$.
\end{theorem}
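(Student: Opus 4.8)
The plan is to fix the standard representation of each itemset as the sequence of its elements written in increasing order of the underlying item order, and to read the lexicographic order on itemsets as the left-to-right comparison of these sorted sequences. I would write $a = a_1 a_2 \cdots a_k$ and $b = b_1 b_2 \cdots b_m$ in this form; since $a \subset b$ and both sets are finite, $k < m$.

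I would then split into two cases according to whether $a$ and $b$ agree on their first $k$ positions. If $a_i = b_i$ for every $i \in \{1, \dots, k\}$, then because $k < m$ the sequence $a$ is a proper initial segment of $b$, that is, $a$ is a proper prefix of $b$, and the first alternative holds. Otherwise, I would let $j$ be the least index with $a_j \neq b_j$; then $j \le k$ and $a_i = b_i$ for all $i < j$.

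The crux is to locate $a_j$ inside $b$. Since $a \subset b$, the element $a_j$ occurs in $b$, say $a_j = b_\ell$. We cannot have $\ell < j$: that would give $a_j = b_\ell = a_\ell < a_j$, because $a$ is strictly increasing and $\ell < j \le k$, a contradiction. Nor can $\ell = j$, by the choice of $j$. Hence $\ell > j$, and since $b$ is strictly increasing, $a_j = b_\ell > b_j$. Thus $a$ and $b$ agree on their first $j-1$ entries while $a_j > b_j$, so $a$ is lexicographically larger than $b$, which is the second alternative.

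I do not expect a real obstacle: the statement is elementary once the conventions are pinned down (itemsets as ascending sequences, lexicographic comparison read left to right, and the prefix case handled separately). The single point that needs care is the observation in the previous paragraph that the first element of $a$ at which the two sequences disagree must reappear strictly further along in $b$; this is what forces the strict inequality $a_j > b_j$, and hence the lexicographic comparison in the direction claimed rather than the opposite one.
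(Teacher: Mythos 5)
Your proof is correct. The paper itself states this result without proof, attributing it to Pritchard, so there is no in-paper argument to compare against; your case split (first $k$ positions agree versus a first index $j$ of disagreement) together with the observation that $a_j$ must reappear strictly later in $b$, forcing $a_j > b_j$, is exactly the standard argument and is complete.
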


The most efficient practical algorithm, AMS-Lex, for identifying extremal sets, described by Bayardo~and~Panda~\cite{BayardoPanda11}, makes use of this lexicographic ordering of the preceding property to substantially prune the search space. In order to present our improvements we must first describe in detail the AMS-Lex algorithm.



\subsection{The AMS-Lex algorithm}
\label{sec:background:ams-lex}

In this section we reproduce the AMS-Lex algorithm, we re-use the notation \cite{BayardoPanda11} when referring to the input ordered dataset $D$:

\begin{itemize}
    \item $D[i]$ denotes the $i^{th}$ itemset in $D$
    \item $D[i][j]$ denotes the $j^{th}$ item of itemset $D[i]$.
    \item $D[i : j]$ denotes the ordered multiset of itemsets $\lbrace D[k] ~| k = i \ldots j \rbrace$ in that order.
    \item $D[i][j : k]$ denotes the ordered multiset of items $\lbrace D[i][l] ~|~ l = j \ldots k \rbrace$.
\end{itemize}

We also re-use Bayardo and Panda's \textit{subsumed} notation: an itemset $A$ is subsumed by $B$ iff $A$ is a subset of $B$.

\begin{algorithm} [t]
\SetAlgoNoLine
\SetKwProg{func}{Function}{}{}
\SetKwFunction{containsSubsetOf}{Contains-Subset-Of}
\func{\containsSubsetOf{$D[b:e]$, $S$, $j$, $d$}}
{
	\KwIn{The ordered multiset of itemsets $D[b:e]$, an itemset $S$ and two integers $j$ and $d$ where $b \leq e$ and $1 \leq j \leq |S|$ and $1 \leq d < |D[b]|$. The parameter $j$ specifies we need only consider $S[j : |S|]$ and $d$ is the size of the common prefix shared by all $ I \in D[b:e] $ and $S$.}
	\KwOut{Returns $true$ iff there exists a proper subset of $S$ within $D[b:e]$, and $false$ otherwise.}

	\nl \If{$S[j] < D[b][d + 1]$}
	{
    	\nl $j \leftarrow NextItem(S, j, D[b][d+1])$\;
    	
    	\nl \If{ $j$ is null }
    	{
    		\nl \KwRet $false$\;
    	}
	}
	
	\nl \eIf{$S[j] = D[b][d + 1]$}
	{
		\nl $e' \leftarrow NextEndRange(D[b:e], S[j], d+1)$\;
		
		\nl \If{$\lvert S \lvert > d + 1$ and $\lvert D[b] \lvert = d+1$}
		{
			\tcc{ $D[b]$ is a proper subset of $S$. }
			\nl \KwRet $true$\;
		}
		
		\nl \If{ $j+1 \leq \lvert S \lvert$ }
		{
			\nl \If{$\containsSubsetOf(D[b : e'], S, j + 1, d + 1)$}
			{
				\nl \KwRet $true$\;
			}
		}
		
		\nl $b \leftarrow e' + 1$\;
	}
	{
		\nl $b \leftarrow NextBeginRange(D[b:e],S[j],d)$\;
		\tcc{When there is no element in $D[b:e]$ that has a value greater than or equal to $S[j]$ at index $d + 1$ then the function $NextBeginRange(D[b:e],S[j],d)$ returns $e+1$; i.e. we can safely deduce that there is no subset of $S$ within the collection $D[b:e]$.}
	}
	
	\nl \If{{$b \leq e$}}
	{
		\nl \KwRet $\containsSubsetOf(D[b : e], S, j, d)$\;
	}
	
	\nl \KwRet $false$\;
}
\caption{Pseudo code for finding if the input dataset $D = \{  D_1, D_2, \dots, D_{n} \}$ contains a proper subset of $S$. A reproduction of the function MarkSubsumed, described by Bayardo and Panda, but used for finding the minimal itemsets rather than the maximal ones, i.e. for finding the minimal itemsets within the dataset $D$ we do not mark the subsumed itemsets but rather return true if a properly subsumed itemset by $S$ exists within $D$ and false otherwise. }
\label{algo:Contains-Subset-Of}
\end{algorithm}

\begin{algorithm} [t]
\SetAlgoNoLine
\SetKwProg{func}{Function}{}{}
\SetKwFunction{amsLex}{Get-Minimal-Itemsets-Lex}
\func{\amsLex{$D$}}
{
	\KwIn{Dataset $D = \{ D_1, D_2, \dots, D_n \}$ that is ordered lexicographically and every itemset $D_i \in D$ is also ordered lexicographically. }
	\KwOut{The minimal itemsets within the dataset $D = \{  D_1, D_2, \dots, D_{n} \}$.}
	
	\nl $bool~is\_min[n] \longleftarrow \{true, true, \dots, true\}$\;
	
	\tcc{ Find itemsets subsumed by proper prefix. }
	\nl $S \longleftarrow D[1]$\;
	
	\nl \For{$i = 2$ \KwTo $n$}
	{
		\nl \If{$ |S| \leq |D[i]| \And D[i][1 : |S|] = S $}
		{
			\tcc{ $S$ is a proper prefix of $D[i]$. }
			\nl $is\_min[i] \longleftrightarrow false$\;
		}
		\Else
		{
			\nl $S \longleftarrow D[i]$\;
		}
	}
	
	\tcc{ Find itemsets subsumed by non-proper prefix. }
	\nl \For{$i = 1$ \KwTo $n - 1$}
	{
		\nl \If{$is\_min[i] \And \containsSubsetOf( D[i+1 : n], D[i], 1, 0 ) $ \tcc{see Algorithm~\ref{algo:Contains-Subset-Of}} }
		{
			\nl $is\_min[i] \longleftarrow false$\;
		}
	}
	
	\nl \KwRet $ \{ D_i \in D$ $\lvert$ $is\_min[i] = true \}$\;
}
\caption{Pseudo code for finding the minimal itemsets within the dataset $D = \{  D_1, D_2, \dots, D_{n} \}$ by using the lexicographic constraint (Theorem~\ref{th:lex}). A reproduction of the AMS-Lex algorithm described by Bayardo and Panda, but used for finding the minimal itemsets rather than the maximal ones, i.e. for finding the minimal itemsets within the dataset $D$ we do not mark the subsumed itemsets but rather mark an itemset as non-minimal if it is a superset another one.}
\label{algo:AMS-Lex}
\end{algorithm}

The pseudo code of the AMS-Lex algorithm itself is shown in Algorithm~\ref{algo:AMS-Lex}, and it applies the result from of Theorem~\ref{th:lex} directly to first identify the proper prefixes that are subsumed by lexicographically smaller itemsets, and then searching among the remaining itemsets using Contains-Subset-Of. The function Contains-Subset-Of takes as input an itemset $S$ and dataset $D$ and returns all $x \in D$ such that $x \subset S$ and $x$ is lexicographically larger than $S$. Contains-Subset-Of makes use of the common prefixes of itemsets in $D$ as well as the lexicographic order of $D$. Since the items in the itemsets themselves are ordered lexicographically, the functions NextBeginRange, NextEndRange, and NextItem can be implemented using binary search.

\begin{figure} [t]
	\centering
	\subfigure[Call graph of Contains-Subset-Of for the itemset $D_1=abc$ over the dataset $D$]{\includegraphics	[scale=0.13]{./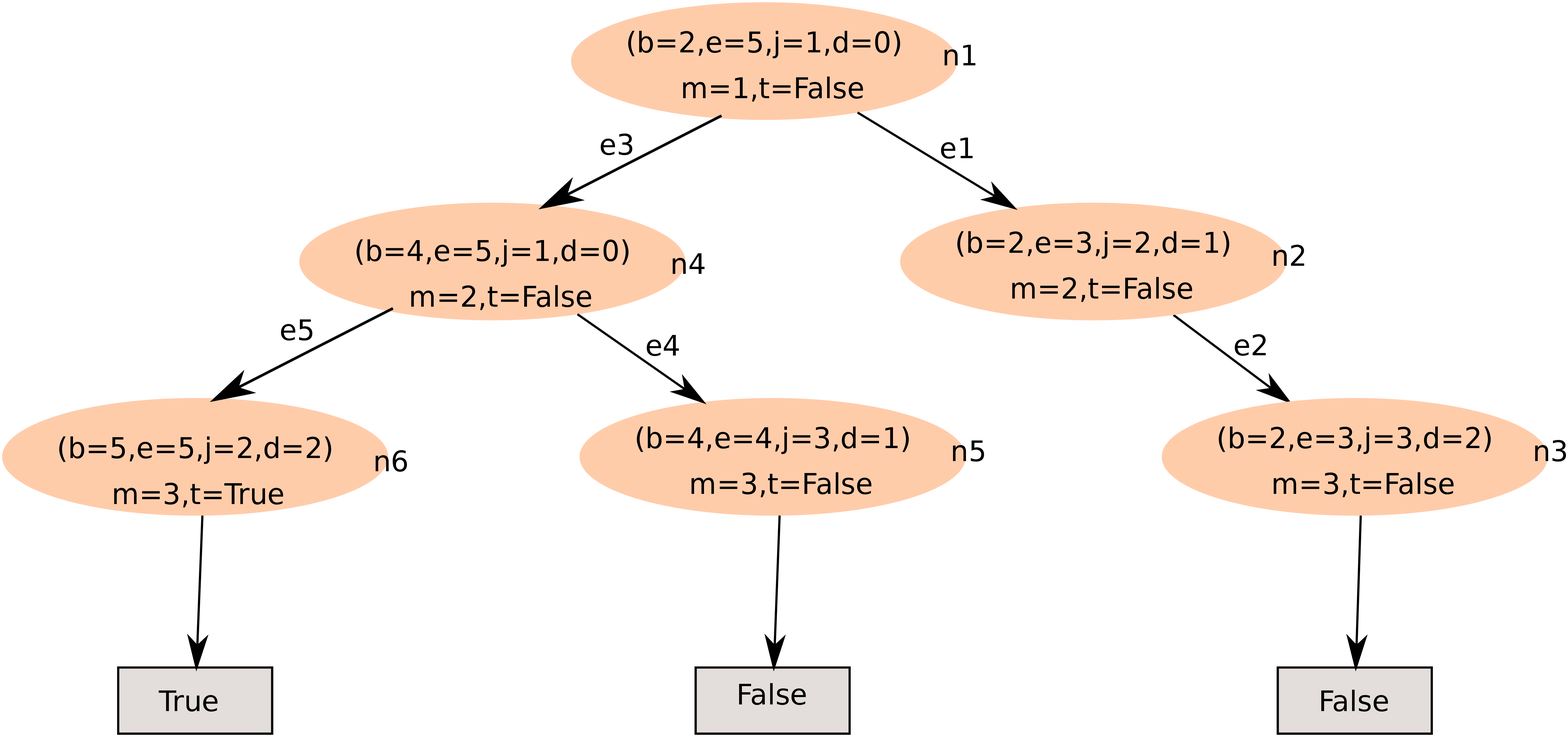}}
	\subfigure[Call graph of Contains-Subset-Of for the itemset $D_2=abde$ over the dataset $D$]{\includegraphics	[scale=0.13]{./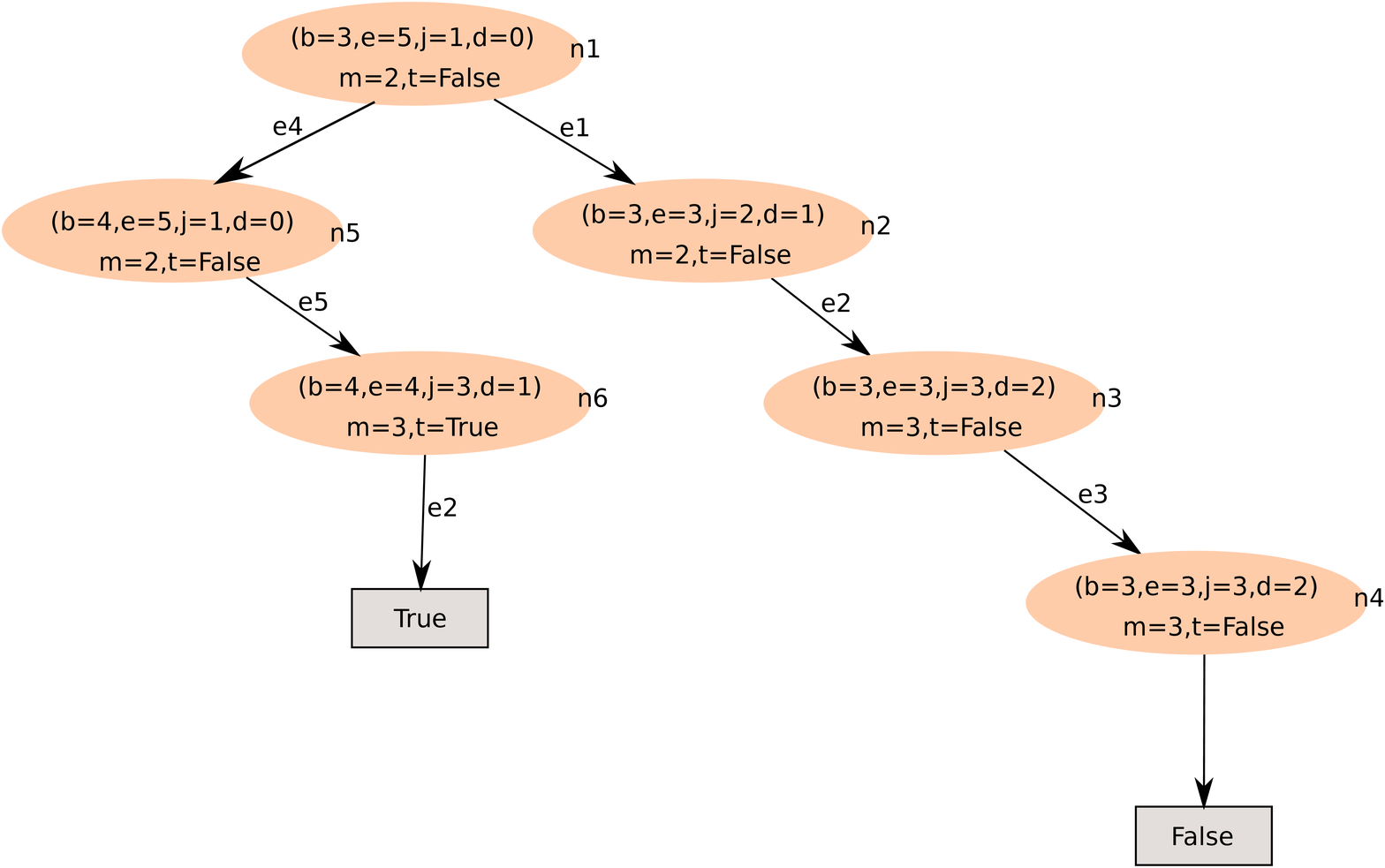}}
	\subfigure[Call graph of Contains-Subset-Of for the itemset $D_3=abdf$ over the dataset $D$]{\includegraphics	[scale=0.13]{./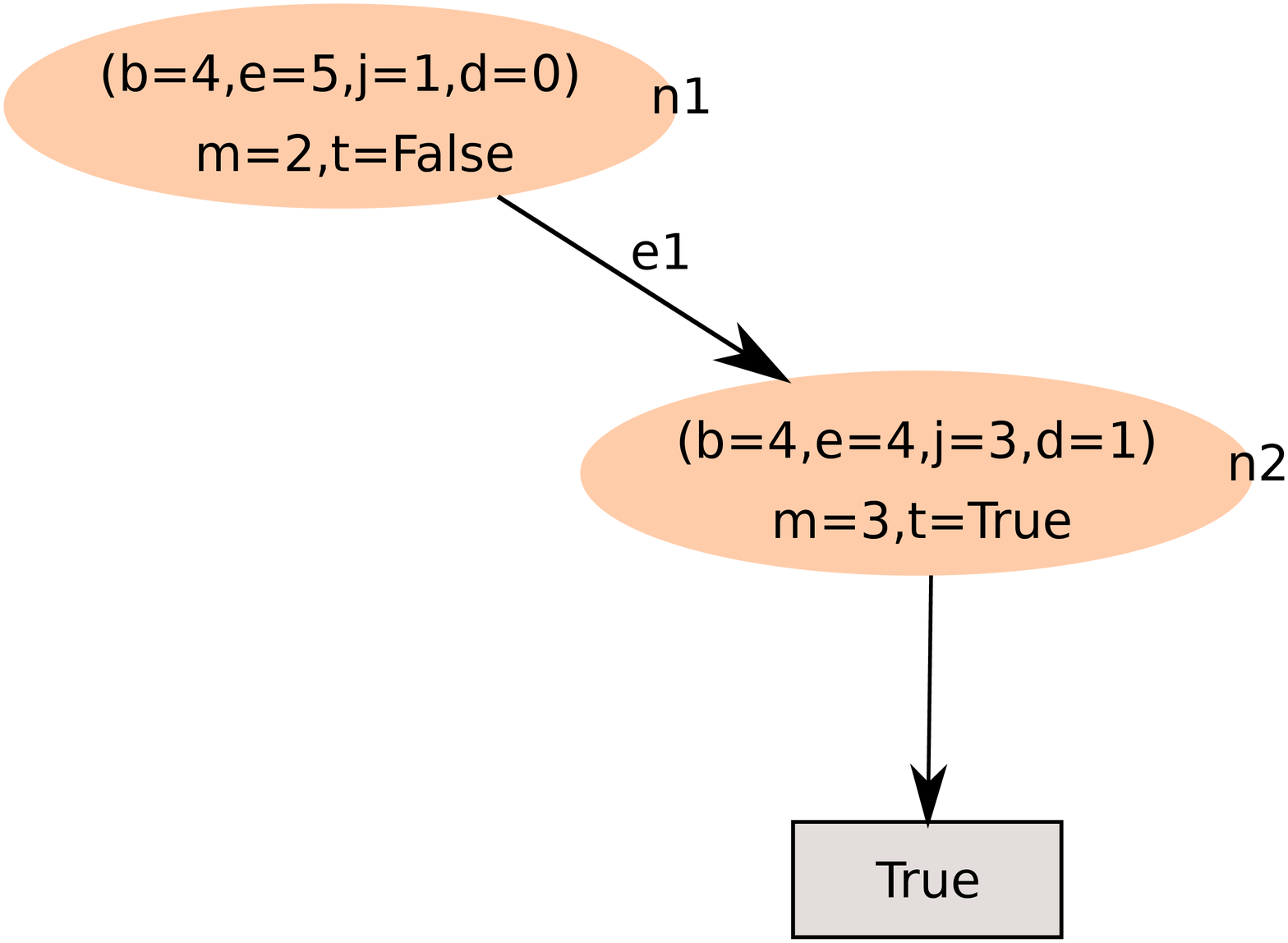}}
	\subfigure[Call graph of Contains-Subset-Of for the itemset $D_4=bd$ over the dataset $D$]{\includegraphics		[scale=0.13]{./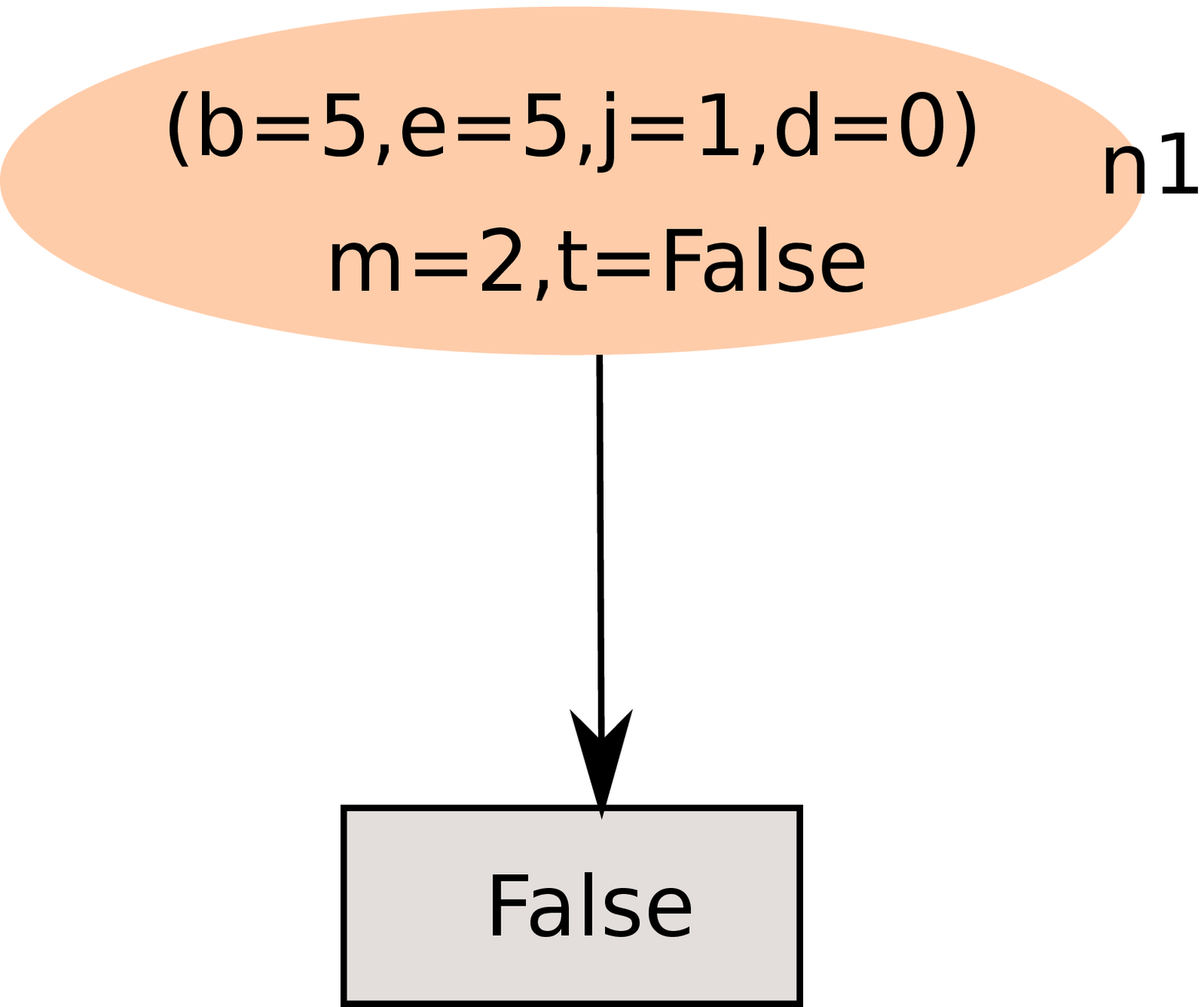}}
	\caption{ The call graphs of the AMS-Lex~\cite{BayardoPanda11} algorithm for the function Contains-Subset-Of over the dataset $D = \{ D_1 = abc, D_2 = abde, D_3 = abdf, D_4 = bd, D_5 = c \}$. All graph nodes($n_i$) and edges($e_j$) are labelled in the order of executions --- first is $n_1$, then $n_2$, then $n_3$, etc. The AMS-Lex algorithm is presented in Algorithm~\ref{algo:AMS-Lex} and Contains-Subset-Of in Algorithm~\ref{algo:Contains-Subset-Of}. Further explanation of these call graphs can be found in Section~\ref{sec:background:ams-lex:example}.}
	\label{fig:algo:ams-lex:example}
\end{figure}

\subsubsection{Example}
\label{sec:background:ams-lex:example}

Figure~\ref{fig:algo:ams-lex:example} presents the call graphs (as per Definition~\ref{def:call_graph}) of the AMS-Lex~\cite{BayardoPanda11} algorithm for finding the minimal itemsets over the dataset $D = \{ D_1 = abc, D_2 = abde, D_3 = abdf, D_4 = bd, D_5 = c \}$. Looking at the figure we can see that the minimal itemsets are $D_4$ and $D_5$; also that $D_1 \supset D_5$, $D_2 \supset D_4$ and $D_3 \supset D_4$. The dataset $D$ is chosen such that every line of the function Contains-Subset-Of is executed at least once thus handling all cases of Bayardo and Panda's~\cite{BayardoPanda11} AMS-Lex algorithm. 

\subsubsection{Contains-Subset-Of Explanation}
The Contains-Subset-Of function exploits the common prefixes of itemsets in $D$ by taking advantage of the lexicographic order of $D$. 
The function is designed to efficiently find all itemsets in the range $D[b:e]$ that are subsets of $S$ (i.e., that are subsumed by $S$).
The itemsets in $D$ are processed in ranges which share a common prefix of length at least $d$.

The first thing we check in the function is if the next item ($D[b][d+1]$) is contained in $S$ by finding the first element of S which is greater than or equal to $D[b][d+1]$. 
If all elements of $S$ smaller then $D[b][d+1]$ we can safely deduce that there are no subsumed itemsets by $S$ in the range $D[b:e]$. 
This is because all itemsets in $D[b:e]$ are ordered lexicographically in ascending order. Hence if $S[\lvert S\lvert] < D[b][d+1]$ then 
$S[\lvert S\lvert] < D[i][d+1]$ for all i in the range $[b,e]$. Hence we reach a state where we know that the element $S[j] \geq D[b][d+1]$.

If $S[j]=D[b][d+1]$ then we know that it is possible for $D[b]$ to be a subset of $S$. 
Hence we have to make a recursive call to Contains-Subset-Of. In order to do this we have to first find a new end range $e'$ such that all elements in $D[b:e']$ have a common prefix of length at least $d+1$. Then  
check if there are any subsumed itemsets. Next we check if the requirements of the recursive call to Contains-Subset-Of that we want to make are met. If this is the case then we mark 
subsumed items by $S$ in the range $D[b:e]$. Since we have already covered the range $D[b:e']$ we set the current start of our range $b$ to $e'$.

If $S[j]>D[b][d+1]$ then we know that $D[b]$ cannot be subsumed by $S$. Hence we search for the first element in $D[b:e]$ which has a value at index $d+1$ greater then or equal to $S[j]$, this operation is referred to as 
subroutine NextBeginRange.

Lastly we check if the current begin range is smaller then the current end range and if it is the case we mark all subsumed sets of $S$ in the range $D[b:e]$ by making a recursive call to Contains-Subset-Of.

\section{A Memoized Algorithm for Identifying Extremal Sets}
\label{sec:memoized}

The AMS-Lex \cite{BayardoPanda11} algorithm uses a frequency based item ordering to reduce the probability of itemsets sharing long common prefixes. Nonetheless, AMS-Lex takes advantage of common prefix shared between consecutive itemsets. More precisely, in the definition of the function MarkSubsumed~\cite{BayardoPanda11} and its variant presented here --- Contains-Subset-Of (Algorithm~\ref{algo:Contains-Subset-Of}); they both have the arguments $D[b:e]$, $S$, $j$, $d$ with the restriction that all itemsets $I \in D[b:e]$ must share a common prefix of size $d$. Hence, even after the item-based frequency ordering of the input dataset, common prefixes are expected to occur, otherwise, this logic would not have been included in AMS-Lex by Bayardo and Panda. Therefore, the current state of the art practical algorithm AMS-Lex exploits and takes advantage of the common prefixes between itemsets.

The observations and memoization technique that we present in this section are all based on the common prefixes shared by itemsets --- we take it a step further than Bayardo and Panda by analysing the behaviour of successive calls to the function Contains-Subset-Of (MarkSubsumed) by two itemsets $S'$ and $S''$ which share a non-empty common prefix; whereas the current approach \cite{BayardoPanda11} focuses on the efficient implementation of the function Contains-Subset-Of.

\subsection{Observations}
Our improved algorithm for extremal set identification memoizes successive calls to the function Contains-Subset-Of, defined in Algorithm~\ref{algo:Contains-Subset-Of}. As we explain below, Bayardo and Panda's algorithm AMS-Lex presented in Algorithm~\ref{algo:AMS-Lex} duplicates work in successive calls to Contains-Subset-Of where itemsets share a non-empty common prefix. We now show more precisely the duplicated work, in terms of the call-graphs resulting from successive calls to Contains-Subset-Of.

\begin{definition}
\label{def:call_graph}
The directed call graph of an itemset $S$ and the function Contains-Subset-Of$(D[b:e], S, j, d)$ is defined as a graph $G(S) = (V, E)$, where $V = \{(b, e, S, j, d) ~\lvert~ b, e, S, j$ and $d$ meet the input requirements of Contains-Subset-Of$\}$, and $(v_1, v_2) \in E$ iff Contains-Subset-Of$(v_1.b,~v_1.e,~v_1.S,~v_1.j,~v_1.d)$ makes a recursive call to Contains-Subset-Of$(v_2.b,~v_2.e,~v_2.S,~v_2.j,~v_2.d)$.
\end{definition}

\begin{remark}
\label{RemarkOutDegreeOfGraph}
Note that since the Contains-Subset-Of function in Algorithm~\ref{algo:Contains-Subset-Of} performs at most two recursive calls, hence the out-degree of any vertex in a call-graph $G(S)$ is at most two. 
\end{remark}

\begin{notation}
For a call graph $G(S) = (V, E)$ and any $v = (b, e, S, j, d) \in V$, we refer to the values of $v$ as $v.b$, $v.e$, $v.j$ and $v.d$; and we refer to the children of $v$ as $v.c_1$ and $v.c_2$. We denote $v.t$ as a boolean field which is true iff there exists a subset of $S$ in the range $[v.b;v.e]$ that is of size $ v.d+1 $. We denote $v.m$ as the maximum index that is accessed from the itemset $S$ without considering any recursive calls of Contains-Subset-Of.
\end{notation}

\begin{remark}
\label{MartinRemark}
Note that at any \textit{single} call-graph node corresponding to a call to function Contains-Subset-Of$(D[b:e], S, j, d)$ the only indices of $S$ that are required are those between $j$ and NextItem$(S, j, D[b][d+1])$. Hence, we can see that $v.m$ is bounded above by NextItem$(S, j, D[b][d+1])$.
\end{remark}

\begin{lemma}
\label{MartinLemma}
Let $S$ and $T$ be itemsets with a common prefix $P$. Let $G(S) = (V_S, E_S)$ and $G(T) = (V_T, E_T)$. 
Suppose that $v_1, v_2 \in V_S$, where $v_1 = (b, e, S, j, d)$, and $v_2 = (b', e', S, j', d')$ such that $j' < \vert P \vert$, and that $(v_1, v_2) \in E_S$. 
Then $(w_1, w_2) \in E_T$ where $w_1=(b, e, T, j, d)$ and  $w_2=(b', e', T, j', d')$.
\end{lemma}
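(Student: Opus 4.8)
The plan is to run the two invocations Contains-Subset-Of$(b,e,S,j,d)$ and Contains-Subset-Of$(b,e,T,j,d)$ side by side and show that, up to and including the recursive call that realises the edge $(v_1,v_2)$, the two executions make exactly the same control-flow decisions. Since $S$ and $T$ agree on their first $|P|$ items, this reduces to checking that every read of the itemset argument — and every subroutine return value — that matters for control flow before that call depends on the itemset only through its prefix of length $|P|$.

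First I would fix the index bookkeeping. Every recursive site of Algorithm~\ref{algo:Contains-Subset-Of} is issued with either the current $j$ (line~15) or $j+1$ (line~10), and the only statement that changes $j$ inside a call is the NextItem update on line~2, which never decreases it; so, writing $j^{\star}$ for the value of $j$ immediately after line~2 in $v_1$'s execution, we have $j\le j^{\star}\le j'$, hence $1\le j\le j^{\star}\le j'<|P|\le\min(|S|,|T|)$. In particular $j\le|T|$, $j'\le|T|$, and $j^{\star}+1\le|P|\le|T|$, so $w_1=(b,e,T,j,d)$ and $w_2=(b',e',T,j',d')$ satisfy the input requirements of Contains-Subset-Of and therefore lie in $V_T$, and the test on line~9 evaluates the same way in both executions.

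For the lock-step comparison: the test on line~1 reads only $S[j]=T[j]$, so both executions branch identically; if they run line~2, then NextItem returns the least index $i\ge j$ with $S[i]\ge D[b][d+1]$, which is $j^{\star}$, and since $S$ and $T$ agree through index $|P|>j^{\star}$, the least such index for $T$ is again $j^{\star}$ (both $S[j{:}j^{\star}-1]$ and $T[j{:}j^{\star}-1]$ lie below $D[b][d+1]$ while $T[j^{\star}]=S[j^{\star}]\ge D[b][d+1]$), so the $T$-execution also leaves $j$ at $j^{\star}$; by Remark~\ref{MartinRemark} no further index of the itemset is consulted. The test on line~5 reads $S[j^{\star}]=T[j^{\star}]$, and NextEndRange (line~6) and NextBeginRange (line~13) are functions of $D$ together with the already-computed value $S[j^{\star}]=T[j^{\star}]$, so all three agree. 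The only remaining decision that inspects the itemset globally is the early-return test on line~7, and this is the crux: on the branch $S[j^{\star}]=D[b][d+1]$, strict monotonicity of the items of $D[b]$ together with the common-prefix precondition $D[b][1{:}d]=S[1{:}d]$ forces $j^{\star}\ge d+1$ (otherwise $S[j^{\star}]=D[b][j^{\star}]\le D[b][d]<D[b][d+1]$), and since $j^{\star}\le j'<|P|\le|S|$ we get $|S|>j^{\star}\ge d+1$; hence the line-7 test is prevented from firing \emph{only} by the failure of its second conjunct $|D[b]|=d+1$ — a condition independent of the itemset — so the $T$-execution makes the same choice on line~7 and, in particular, does not return on line~8.

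It remains to identify which of the at most two recursive sites of $v_1$ (Remark~\ref{RemarkOutDegreeOfGraph}) produces $v_2$. If it is the line-10 site, then $v_2=(b,e',S,j^{\star}+1,d+1)$ and, because $j^{\star}+1\le|T|$, the $T$-execution reaches line~10 with the identical arguments and calls $w_2$, so $(w_1,w_2)\in E_T$. If it is the line-15 site reached through the else-branch of line~5, then $v_1$ passes straight to lines~14--15 and so does the $T$-execution, again delivering the edge. The delicate case is the line-15 site reached from the branch $S[j^{\star}]=D[b][d+1]$: there the $T$-execution reaches line~15 only if the intervening line-10 call — which \emph{is} made, since $j^{\star}+1\le|T|$ — returns $false$ in the $T$-execution, exactly as it does in the $S$-execution, so that line~11 is not taken. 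This return-value synchronisation is, I expect, the main obstacle, because the line-10 sub-call carries the $j$-argument $j^{\star}+1$, which may be as large as $|P|$ and so is not covered by the lock-step reads argument above. I would attack it by strengthening the induction so that it also carries the claim that Contains-Subset-Of yields the same Boolean on $S$ and on $T$ for every input whose $j$-argument lies below $|P|$, and by treating the boundary value $j^{\star}+1=|P|$ separately, exploiting that on this branch $|D[b]|\ne d+1$ forces every itemset of the range $D[b{:}e']$ to have length at least $d+2$ (hence to follow the prefix $P$ through its first $d+1$ positions), which constrains which of them can be a subset of $S$ versus of $T$. Everything else is the bookkeeping above; this return-value matching is where the real content lies.
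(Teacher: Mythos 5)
Your lock-step simulation is essentially the paper's own argument carried out in full: the paper's proof consists of observing, via Remark~\ref{MartinRemark}, that a single activation of Contains-Subset-Of reads the itemset only at indices in $[j,j']$, which lie inside the common prefix, and concluding that the edge transfers. Your bookkeeping for the decisions at lines 1--2, 5--6 and 13 of Algorithm~\ref{algo:Contains-Subset-Of}, and for the line-7 and line-10 call sites, is correct and is exactly what that one-line appeal compresses. Where you go beyond the paper is in noticing that Remark~\ref{MartinRemark} by its own definition ignores recursive calls, so it cannot govern the line-15 edge in the branch $S[j^{\star}]=D[b][d+1]$: that edge is guarded by the \emph{Boolean return value} of the line-10 sub-call, which is not a function of the prefix alone. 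You correctly single this out as ``where the real content lies'' --- but you then leave it as a plan rather than a proof, so the proposal as written has a genuine gap precisely at the one case that needs an argument.

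Unfortunately the plan you sketch cannot be carried out, because the obstacle you isolated is real. The strengthened hypothesis you propose --- that Contains-Subset-Of returns the same Boolean on $S$ and on $T$ whenever its $j$-argument is below $\vert P\vert$ --- is false: a set in the searched range can be a proper subset of $T$ without being a subset of $S$, since $S$ and $T$ are unconstrained beyond $P$. Concretely, take $S=abc$, $T=abd$ (so $P=ab$) and the range $D[b:e]=(ad,\, be)$. In the $S$-execution of Contains-Subset-Of$(D[b:e],S,1,0)$ the line-10 sub-call on $(\{ad\},S,2,1)$ returns $false$ (NextItem finds no item of $abc$ that is $\geq d$), so control falls through to line 15 and creates the edge to $v_2=(e'+1,e,S,1,0)$ with $j'=1<\vert P\vert$. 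In the $T$-execution the same sub-call advances to $T[3]=d$, detects $ad\subset abd$ at its line 7, and returns $true$, so the caller exits at line 11 and never reaches line 15; hence $(w_1,w_2)\notin E_T$ and the edge-for-edge claim fails at exactly the site you flagged. Any repair must either weaken the statement (for instance to edges whose existence does not depend on a sub-call's return value, with the memoized algorithm's correctness argued separately through the $v.t$ and re-execution mechanism) or impose extra hypotheses; neither your proposed induction nor the paper's appeal to Remark~\ref{MartinRemark} closes this case.
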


\begin{proof}
Referring to Algorithm~\ref{algo:Contains-Subset-Of} note that because $S$ and $T$ have a common prefix $P$ of length greater than $j'$ all requirements of Contains-Subset-Of are met for the inputs represented by $w_1$ and $w_2$. Hence we have $w_1, w_2 \in V_T$. We now need to show that there is an edge between $w_1$ and $w_2$. Since $(v_1, v_2) \in E_S$ and from Remark~\ref{MartinRemark} the only values required of $S$ by Contains-Subset-Of are in the range $[j, j']$ and as a result of the further assumption that $j' < \vert P \vert$ it follows immediately that $(w_1, w_2) \in E_T$.
\end{proof}

\begin{remark}
Note that for any itemset $S$, the call graph $G(S) = (V, E)$ is acyclic because in all recursive calls to Contains-Subset-Of the range $[b, e]$ gets smaller, $S$ is always constant, $j$ increases and $d$ increases. 
\end{remark}

\begin{notation}
For any itemset $S$, we refer to the subgraph of $G(S) = (V, E)$ identified by $V'=\{(b, e, S, j, d) \in V ~\lvert~ j < \lvert P \lvert\}$ as $G(S) \lvert_{j < \lvert P \lvert}$.
\end{notation}

\begin{corollary}
\label{MartinCorollary}
Let $S$ and $T$ be $itemsets$ with a common prefix $P$.
Then $G(S) \lvert_{j < \lvert P \lvert} = G(T) \lvert_{j < \lvert P \lvert}$.
\end{corollary}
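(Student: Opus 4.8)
The plan is to obtain the corollary from Lemma~\ref{MartinLemma} by applying it in both directions; this is legitimate because the hypothesis that $S$ and $T$ share the common prefix $P$ is symmetric in $S$ and $T$. First I would make the asserted equality precise: within a single call graph $G(S)$ the itemset coordinate of every vertex is the constant $S$, so I would view $G(S)$ (and likewise $G(T)$) as a graph whose vertices are tuples $(b,e,j,d)$. The claim $G(S)\lvert_{j<\lvert P\lvert}=G(T)\lvert_{j<\lvert P\lvert}$ then says precisely that these two graphs have the same vertex set and the same edge set among the tuples with $j<\lvert P\lvert$.

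The edge halves are essentially a direct invocation of Lemma~\ref{MartinLemma}. Given an edge of $G(S)\lvert_{j<\lvert P\lvert}$, both of its endpoints $v_1=(b,e,S,j,d)$ and $v_2=(b',e',S,j',d')$ lie in the induced subgraph, so in particular $j'<\lvert P\lvert$ and $(v_1,v_2)\in E_S$; Lemma~\ref{MartinLemma} then produces vertices $w_1=(b,e,T,j,d)$, $w_2=(b',e',T,j',d')$ of $G(T)$ with $(w_1,w_2)\in E_T$, and since $w_1,w_2$ still have $j$-coordinates $j,j'<\lvert P\lvert$ this is an edge of $G(T)\lvert_{j<\lvert P\lvert}$. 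Swapping $S$ and $T$ gives the reverse inclusion, so the edge sets coincide. It remains to match the vertex sets, and here the observation I would use is that at every vertex of a call graph one has $d<j$: the top-level invocation starts from $(j,d)=(1,0)$, and along every recursive call the difference $j-d$ never decreases --- it is preserved by the step sending $(j,d)$ to $(j+1,d+1)$, unchanged by a tail recursion, and can only increase when the NextItem line advances $j$. Hence $j<\lvert P\lvert$ forces $d+1\le j<\lvert P\lvert$, so $S$ and $T$ agree on positions $1,\dots,d+1$. Inspecting Algorithm~\ref{algo:Contains-Subset-Of}, the input requirements of Contains-Subset-Of for a tuple $(b,e,S,j,d)$ refer to the itemset only through $\lvert S\lvert$ (in the condition $1\le j\le\lvert S\lvert$) and through the common-prefix length of $\{S\}\cup D[b:e]$; once $j<\lvert P\lvert\le\lvert S\lvert$ and $d<\lvert D[b]\lvert$, both of these depend on the itemset only through $S[1:d+1]=T[1:d+1]$, so replacing $S$ by $T$ leaves them intact. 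This gives the vertex-set equality, and together with the edge argument the corollary follows.

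The step I expect to be the main obstacle is this last one --- checking that the membership condition ``$d$ is the common-prefix length of $\{S\}\cup D[b:e]$'' is unaffected by swapping $S$ for $T$ when $j<\lvert P\lvert$ --- and it is exactly where the invariant $d<j$ is needed. (If one reads Definition~\ref{def:call_graph} as admitting every signature-valid tuple rather than only those reachable from a top-level call, then $d<j$ must be imposed by first restricting to the reachable vertices; otherwise a spurious tuple with small $j$ but $d\ge\lvert P\lvert$ could belong to one of the two subgraphs and not the other.) The edge part, by contrast, is nothing more than a two-sided application of Lemma~\ref{MartinLemma}.
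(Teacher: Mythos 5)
Your proof is correct and follows essentially the same route as the paper, whose entire argument is a one-line instruction to apply Lemma~\ref{MartinLemma} inductively starting from the root $(b,e,S,1,0)$; your edge argument is exactly that repeated application, made two-sided via the evident $S$/$T$ symmetry. The extra care you take over the vertex sets --- the invariant $d<j$ and the observation that the input requirements depend on the itemset only through its length-$(d+1)$ prefix --- together with your caveat about restricting to tuples reachable from the root, is a refinement the paper leaves implicit rather than a different approach.
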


\begin{proof}
Use induction to apply Lemma~\ref{MartinLemma} multiple times starting from the root of $G(S)$ identified by the vertex $(b, e, S, j=1, d=0)$.
\end{proof}

\subsection{Algorithm}
\label{sec:memoized:algo}

\begin{algorithm} [t]
\SetAlgoNoLine
\SetKwProg{func}{Function}{}{}
\SetKwFunction{containsSubsetOfMemoized}{Contains-Subset-Of-Memoized}
\func{\containsSubsetOfMemoized{$D$, $i$, $p$, $v$}}
{
	\KwIn{ The dataset $ D = \{  D_1, D_2, \dots, D_{n} \} $. The parameter $i$ specifies that we trying to find if a proper subset of the itemset $D[i]$ exists within $D[i+1:n]$. The input also contains a call graph node $v \in G(S)$ for some itemset $S$ and same dataset $D$; and the integer $p$ --- the size of the longest common prefix between $S$ and $D[i]$. }
	\KwOut{ Returns $true$ iff there exists a proper subset of $D[i]$ within $D[i+1:n]$, and $false$ otherwise. }
	
	\nl \If{$v.m > p$}
	{
		\tcc{ The maximum index that was accessed from the method $\containsSubsetOf$ in the memoized iteration represented by $v$ is larger then the size of the common prefix, so we must invoke the the function to find the non-proper subsets of $D[i]$ as no more memoized results can be used. }
    	\nl $b \leftarrow max(v.b, i + 1)$\;
    	
    	\nl \If{$b \leq v.e$}
    	{
	    	\tcc{ We assume a modified version of the function $\containsSubsetOf$ which returns a pair consisting of a boolean variable and a node representing the call stack of the function. }
    		\nl $ \langle res, v \rangle \longleftarrow \containsSubsetOf(D[b:v.e], D[i], v.j, v.d) $\;
    		
    		\nl \KwRet $ res $\;
    	}

    	\nl $v \longleftarrow null$\;\label{algo:Contains-Subset-Of-Memoized:v_null}
    	
    	\nl \KwRet $ false $\;
    }
    
    \nl \If{$ v.c_1 \neq null $}
    {
	    \nl \If{$ \containsSubsetOfMemoized( D, v.c_1, i, p ) $}
	    {
		    \nl $res \longleftarrow true$\;
	    }
    }
    
    \nl \If{$ v.c_2 \neq null $}
    {
	    \nl \If{$ \containsSubsetOfMemoized( D, v.c_2, i, p ) $}
		{
			\nl $res \longleftarrow true$\;
		}
    }
    
    \tcc{ recall that $v.t$ equals true iff a subset was found in the execution of the function $\containsSubsetOf$ without considering the recursive calls; i.e. there exists a non-proper subset of $D[i]$ of size smaller then the length of the common prefix $p$ between $S$ and $D[i]$. }
    \nl \KwRet $ v.t $\;
}
\caption{ Pseudo code for finding if the dataset $D = \{  D_1, D_2, \dots, D_{n} \}$ contains a proper subset of $D[i]$ by using memoization --- the call graph node $v \in G(S)$ and the common prefix between $ D_i $ and $S$.  }
\label{algo:Contains-Subset-Of-Memoized}
\end{algorithm}

\begin{algorithm} [t]
\SetAlgoNoLine
\SetKwProg{func}{Function}{}{}
\SetKwFunction{amsLex}{Get-Minimal-Itemsets-Lex-Memoized}
\func{\amsLex{$D$}}
{
	\KwIn{Dataset $D = \{ D_1, D_2, \dots, D_n \}$ that is ordered lexicographically and every itemset $I \in D$ is also ordered lexicographically. }
	\KwOut{The minimal itemsets within the dataset $ D $.}
	
	\nl $bool~is\_min[n] \longleftarrow \{true, true, \dots, true\}$\;
	
	\tcc{ Find itemsets subsumed by proper prefix. }
	\nl $S \longleftarrow D[1]$\;
	
	\nl \For{$i = 2$ \KwTo $n$}
	{
		\nl \If{$ |S| \leq |D[i]| \And D[i][1 : |S|] = S $}
		{
			\tcc{ $S$ is a proper prefix of $D[i]$. }
			\nl $is\_min[i] \longleftrightarrow false$\;
		}
		\Else
		{
			\nl $S \longleftarrow D[i]$\;
		}
	}
	
	\tcc{ Find itemsets subsumed by non-proper prefix. }
	\nl $S \longleftarrow null$\;
	\nl $v \longleftarrow null$\;
	\nl \For{$i = 1$ \KwTo $n - 1$}
	{
		\nl \If{$is\_min[i] $}
		{
			
			\nl \If{$v = null$}
			{
				\tcc{ defined in Algorithm~\ref{algo:Contains-Subset-Of} but assuming that it returns a pair of a boolean value $res$ and the call stack represented by $v$. }
				\nl $\langle res, v \rangle \longleftarrow \containsSubsetOf( D[i+1 : n], D[i], 1, 0 )$\;
				
				\nl \If{$ res $}
				{
					
					\nl $is\_min[i] \longleftarrow false$\;
				}
			}
			\Else
			{
				\tcc{ largest common prefix of $S$ and $D[i]$. }
				\nl $p \longleftarrow max( \{ 1 \leq j \leq min(|D[i]|, |S|) ~|~D[i][1 : j] = S[1 : j] \} )$\;
				
				\tcc{ note that the function $\containsSubsetOfMemoized$ modifies the node $v$. }
				\nl \If{$ \containsSubsetOfMemoized( D, i, p, v ) $}
				{
					\nl $is\_min[i] \longleftarrow false$\;
				}
			}
			
			\nl $S \longleftarrow D[i]$\;
		}
	}
	
	\nl \KwRet $ \{ D_i \in D$ $\lvert$ $is\_min[i] = true \}$\;
}
\caption{Pseudo code for finding the minimal itemsets within the dataset $D = \{  D_1, D_2, \dots, D_{n} \}$ by using memoization and the lexicographic constraint (Theorem~\ref{th:lex}).}
\label{algo:AMS-Lex-Memoized}
\end{algorithm}

The pseudo code of our modified algorithm for identifying minimal sets is presented in Algorithm~\ref{algo:AMS-Lex-Memoized} and we now give an informal description of its behaviour. 
For each call made to Contains-Subset-Of$(D[i + 1, n], D[i], 1, 0)$ we memoize the call graph $G(D[i])$ of the execution path.
When we get to the point when we need to find if there is a subsumed itemset by $D[i+1]$ we first identify the common prefix $P$ of $D[i]$ and $D[i + 1]$.
Then we traverse $G(D[i])$ using depth first search. For each vertex $v$ we check if a recursive call is made to Contains-Subset-Of with some $j \geq \vert P \vert$.
If this is the case then we execute the function Contains-Subset-Of with input $v$; otherwise we recursively traverse the children of $v$. This is a direct result from Corollary~\ref{MartinCorollary}. In practice we note that, we need not memoize the full call graph $G(D[i])$ as we are only ever going to use nodes $w \in G(D[i])$ for which $w.j < |P|$.

\begin{remark}
It is important to note that we use a modified version of the function Contains-Subset-Of by assuming that it returns a pair of a boolean result as per the specification from Algorithm~\ref{algo:Contains-Subset-Of} and the call graph representing its execution path. We use this in the pseudo code of the memoized version of the memoized version of AMS-Lex presented in Algorithm~\ref{algo:AMS-Lex-Memoized}.
\end{remark}

\begin{figure} [t]
	\centering
	\subfigure[The memoized call graph $v$ after processing the itemset $D_1=abc$over the dataset $D$]{\includegraphics[scale=0.13]{./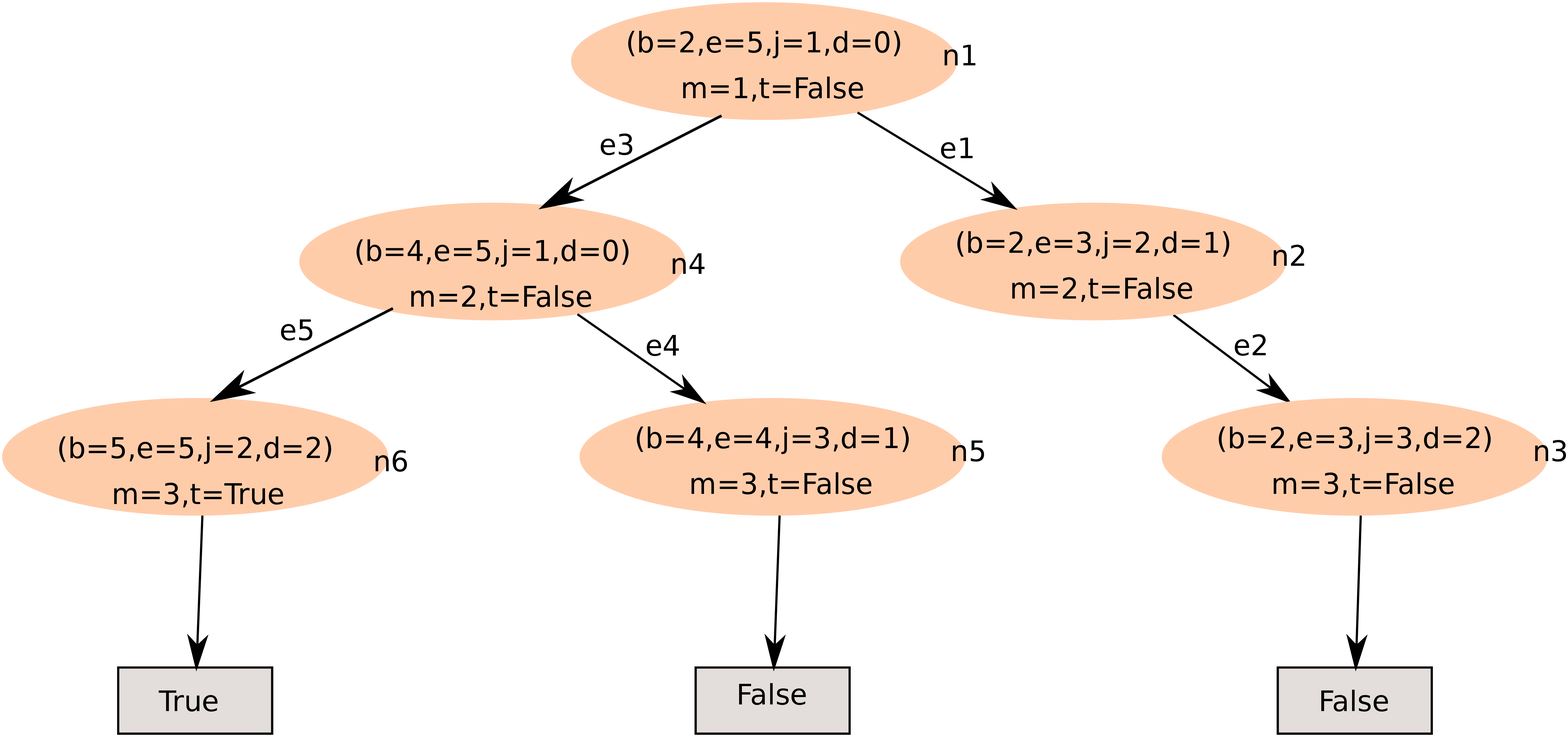}}
	\subfigure[The memoized call graph $v$ after processing the itemset $D_2=abde$ over the dataset $D$]{\includegraphics[scale=0.13]{./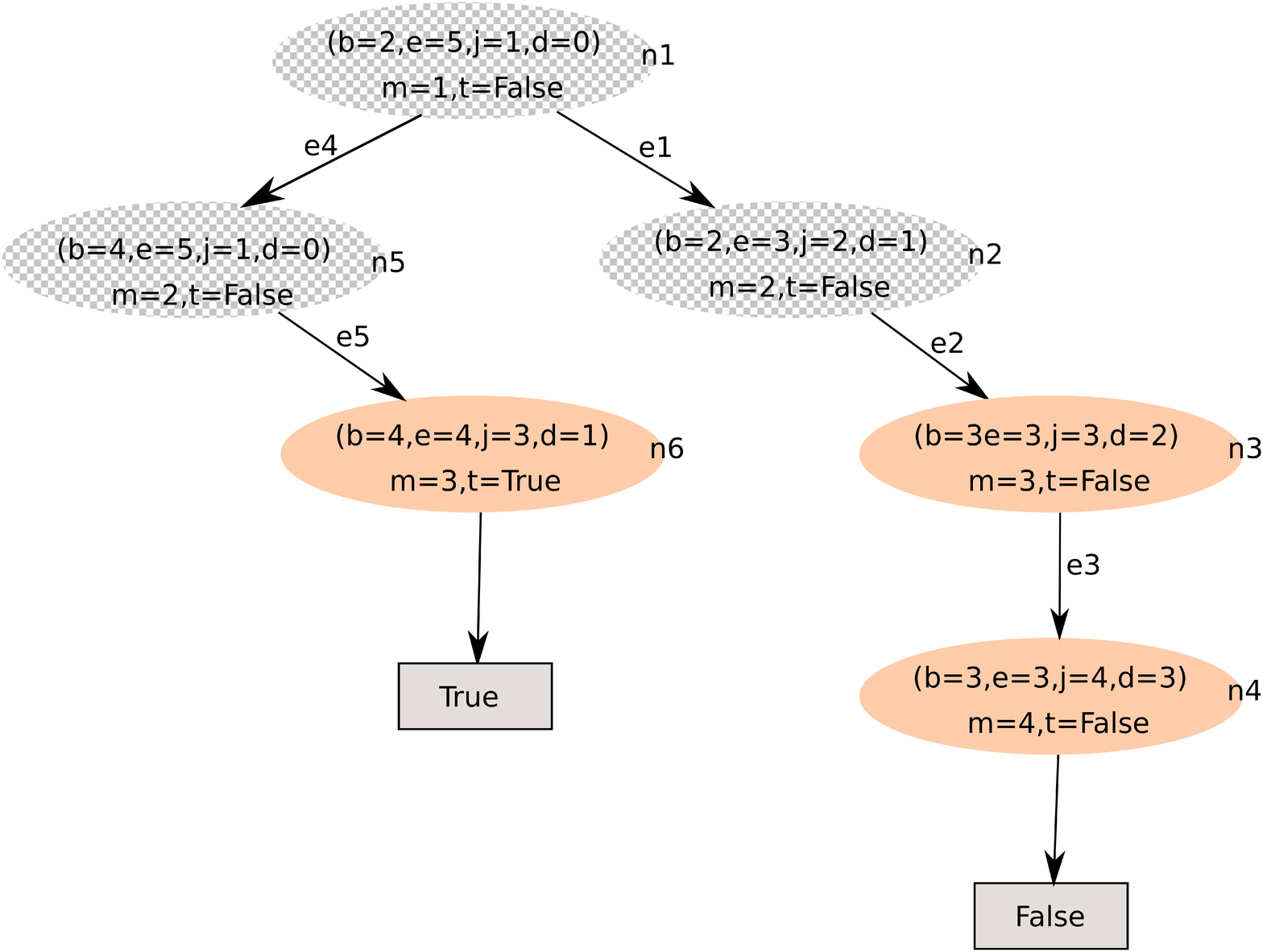}}
	\subfigure[The memoized call graph $v$ after processing the itemset $D_3=abdf$ over the dataset $D$]{\includegraphics[scale=0.13]{./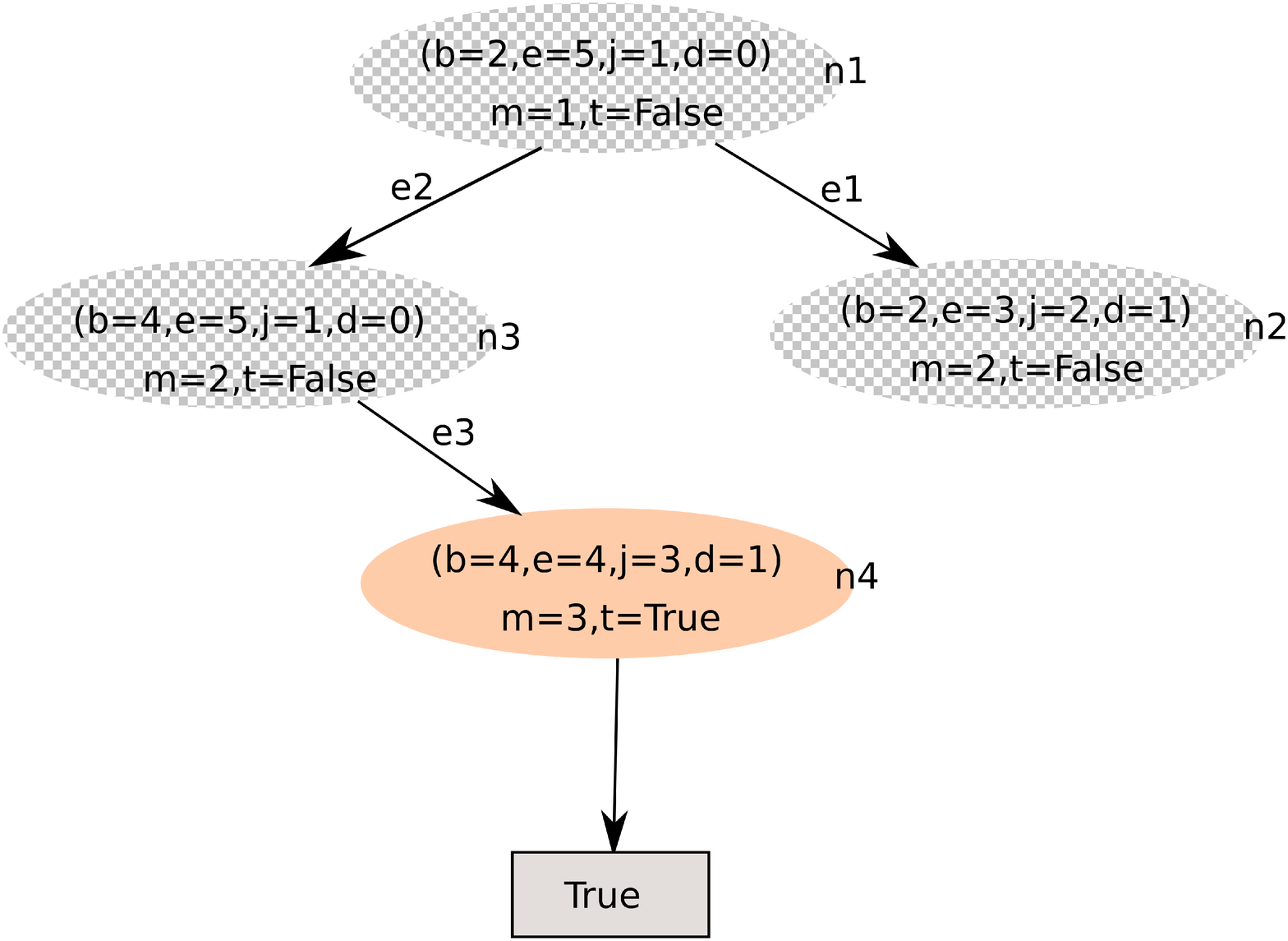}}
	\subfigure[The memoized call graph $v$ after processing the itemset $D_4=bd$ over the dataset $D$]{\includegraphics[scale=0.13]{./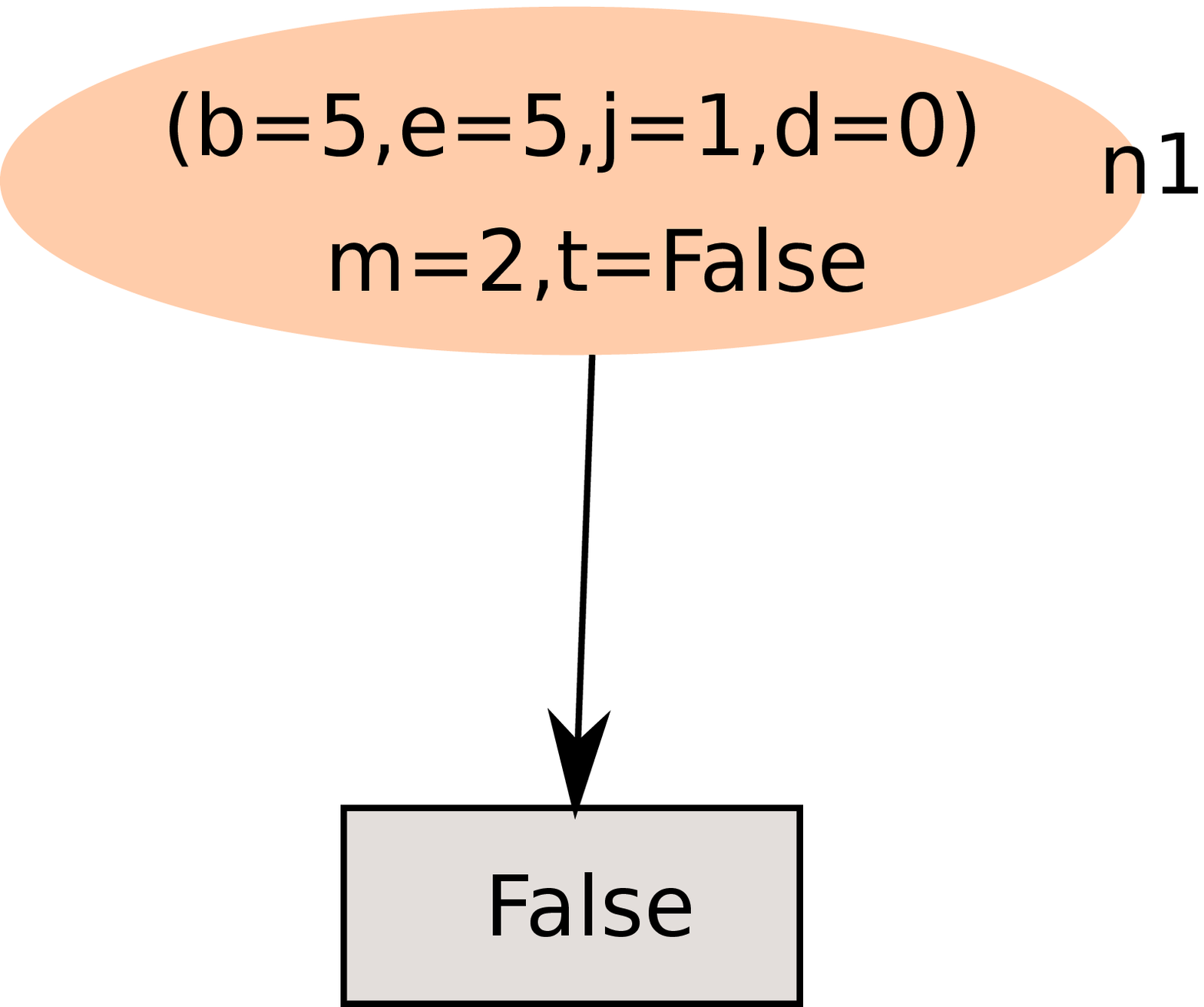}}
	\caption{ This figure presents the evaluation of the memoized version of AMS-Lex over the same dataset $D$ as presented in Figure~\ref{fig:algo:ams-lex:example}. Here we show exactly which parts of the graph are memoized --- the shaded nodes. Each sub-figure shows the memoized call graphs $v_k$ as per Algorithm~\ref{algo:AMS-Lex-Memoized} after processing every itemset $D_k$ from the dataset $D = \{ D_1 = abc, D_2 = abde, D_3 = abdf, D_4 = bd, D_5 = c \}$. All graph nodes and edges are labelled in the order of executions --- first is $n_1$, then $n_2$, then $n_3$, etc. The solid nodes in the graphs are evaluated using Algorithm~\ref{algo:Contains-Subset-Of} and the shaded nodes are memoized. Further explanation of these call graphs can be found in Section~\ref{sec:memoized:algo:example}. }
	\label{fig:algo:memoized:example}
\end{figure}

\subsubsection{Example}
\label{sec:memoized:algo:example}

The sample dataset that the memoized algorithm is evaluated on in Figure~\ref{fig:algo:memoized:example} is the same as the dataset that AMS-Lex is evaluated on in Figure~\ref{fig:algo:ams-lex:example}. The call graphs in Figure~\ref{fig:algo:memoized:example} present visually exactly which parts of the call graphs are memoized --- the shaded nodes --- by keeping track of the memoized call graph --- variable $v$ in Algorithm~\ref{algo:AMS-Lex-Memoized}. 

We see that in general, the memoized call graph of an itemset $D_i$ could be used when processing itemset $D_{i+x}$ for any integer $x > 0$. In our presented example in Figure~\ref{fig:algo:memoized:example} we see that we use part of the memoized call graph from $D_1$ when processing $D_2$ and $D_3$; this happens because $D_1$, $D_2$ and $D_3$ share the non-empty common prefix $ab$.

In Figure~\ref{fig:algo:memoized:example} (c) we see exactly that a subgraph of $v$ gets reset to $null$ --- line~\ref{algo:Contains-Subset-Of-Memoized:v_null} in Algorithm~\ref{algo:Contains-Subset-Of-Memoized}. That is when $D_3$ is processed, the memoized nodes $n_3$ and $n_4$ from Figure~\ref{fig:algo:memoized:example} (b) are removed in the (new) memoized call graph from Figure~\ref{fig:algo:memoized:example} (c).

\subsection{Complexity Analysis}

\paragraph{Worst Case Time Complexity}
It is easy to see that in the worst case (when no two itemsets have a common prefix), the complexity of our algorithm is equal to that of AMS-Lex, that is  $O(N^{2} / \log(N))$, where $N$ is the sum of the cardinalities of all itemsets in the input dataset.

\paragraph{Runtime Comparison to AMS-Lex}
Our algorithm's run time is clearly bounded above by the time required by AMS-Lex. Moreover, as the number of common prefixes among the $itemsets$ increases, the faster (comparatively) our algorithm becomes. Essentially by executing Contains-Subset-Of fewer times, we save run time consumed by the low level searching routines $NextItem$, $NextEndRange$, and $NextBeginRange$ which are the bottleneck of the AMS-Lex algorithm as per \cite{BayardoPanda11}.

\paragraph{Space Complexity}
In addition to the memory required by AMS-Lex, Algorithm~\ref{algo:AMS-Lex} stores (part of) the call graph of Contains-Subset-Of. Clearly the size of the call graph is bounded above by the size of the input, denoted as $N$. Since only the required portion of the call graph, as defined by Corollary~\ref{MartinCorollary}, is stored in practice, the extra space required is commonly much less than the size of the input.

\subsection{Implementation Details}
\label{ImplDetails}
We implemented our algorithm as a modification to the publicly available implementation\footnote{\url{https://code.google.com/p/google-extremal-sets/}} of the AMS-Lex algorithm, only introducing
the memoization described in Algorithm~\ref{algo:AMS-Lex-Memoized}. We regard this as valuable since it allows us to directly measure the improvement in performance resulting from memoization.

\section{A Parallel Algorithm for Identifying Extremal Sets}
\label{sec:parallel}

We use the complexity analysis of the function AMS-Lex~\cite{BayardoPanda11} to identify the bottleneck of the existing algorithm. In the worst case, finding all proper prefix subsumed itemsets takes $ \BigO{ N } $ computational steps and finding the remaining non-minimal itemsets takes $ \BigO{ N^2 / log(N) } > \BigO{ N }$, where $N$ is the size of the input. Consequently, the novel work presented in this section is a parallel algorithm that finds the non-proper prefix subsumed itemsets of $D$, i.e. we present a parallel implementation of the function Get-Minimal-Itemsets-Lex from Algorithm~\ref{algo:AMS-Lex}.

\subsection{Observation}
The first observation we make is that the pseudo code of the function Contains-Subset-Of, presented in Algorithm~\ref{algo:Contains-Subset-Of} that is a reproduction of Contains-Subset-Of~\cite{BayardoPanda11}, does not modify the input dataset $D$. Hence, this makes the algorithm of finding all minimal itemsets within $D$ embarrassingly parallel.


\subsection{Algorithm}

\begin{algorithm} [t]
\SetAlgoNoLine
\KwIn{ Dataset $D = \{ D_1, D_2, \dots, D_{n} \}$ and the degree of parallelism $P$. }
\KwOut{ The minimal itemsets within the dataset $D$. i.e. $Min(D)$. }
\SetKwProg{func}{Function}{}{}
\SetKwFunction{findMinLex}{Get-Minimal-Itemsets-Lex-Parallel}
\SetKwFunction{threadFunctor}{Thread-Functor}
\func{\findMinLex{$\mathbf{dataset}$ $D$, $\mathbf{integer}$ $P$}}
{
	\nl $atomic<bool> is\_min[r] \longleftarrow \{true, true, \dots, true\}$\;
	\tcc{ atomic boolean variables. }
	
	\tcc{ Find itemsets subsumed by proper prefix. }
	\nl $S \longleftarrow D[1]$\;
	
	\nl \For{$i = 2$ \KwTo $n$}
	{
		\nl \If{$ |S| \leq |D[i]| \And D[i][1 : |S|] = S $}
		{
			\tcc{ $S$ is a proper prefix of $D[i]$. }
			\nl $is\_min[i] \longleftrightarrow false$\;
		}
		\Else
		{
			\nl $S \longleftarrow D[i]$\;
		}
	}

	\tcc{ Find itemsets subsumed by non-proper prefix using $P$ parallel threads. }
	\nl $atomic<int> index \longleftarrow 1$\;
	\tcc{ the index that is to be processed next. }
	
	\nl start $P$ parallel instances of \threadFunctor$( D, index, is\_min )$\;
	
	\nl wait for all $P$ instances to finish working\;
	
	\nl \KwRet $ \{ D_i \in D$ $\lvert$ $is\_min[i] == true \}$\;
}

\func{\threadFunctor{$\mathbf{dataset}$ $D$,  $\mathbf{atomic<integer>}$ $index$, $\mathbf{atomic<bool>}$ $m[r]$}}
{
	\nl $i \longleftarrow$ fetch-and-increment$(index)$\;
	\tcc{an atomic operation}
	
	\nl \While{$i \leq n$}
	{
		\tcc{ It is safe to invoke the function $\containsSubsetOf$ from multiple threads at the same time as it requires only read-only access to the dataset $D$. }
		\nl \If{\containsSubsetOf$( D[i+1 : n], D[i], 1, 0 )$ \tcc{ as per Algorithm~\ref{algo:Contains-Subset-Of} } }
		{
			\tcc{ mark the $i$-th itemset as non-minimal because the dataset $D$ contains a proper subset of the itemset $D[i]$.}
			\nl $m[i] \longleftarrow false$\;
			\tcc{ atomically setting the $i$-th boolean value.}
		}
		
		\nl $i \longleftarrow $ fetch-and-increment$(index)$\;
	}
}
\caption{Pseudo code for finding the minimal itemsets $M$ of the input dataset $D = \{  D_1, D_2, \dots, D_{n} \}$ using $P$ threads. We present a subroutine Find-Min-Lex which identifies the minimal itemsets of $D$ using $P$ parallel threads. It is important to note that in the Thread-Functor subroutine the variables $index$ and $is\_min$ are passed by reference, meaning that they are shared between threads.}
\label{algo:AMS-Lex-Parallel}
\end{algorithm}

The pseudo code for our parallel algorithm of finding the minimal itemsets within a lexicographically ordered dataset is presented in Algorithm~\ref{algo:AMS-Lex-Parallel}.

\paragraph{Entry Point}
We first mark every itemset within the dataset $D$ as minimal. Next, we mark all itemsets as not minimal for which there exists a proper prefix subsumed itemset within the dataset. We then start $P$ parallel instances of the thread functor whose job is to mark itemsets as non-minimal for which there exists a non-prefix (lexicographically larger) subsumed itemset. 

\paragraph{Thread Functor}
All of the parallel instances of the Thread-Functor function share a common integer variable $index$ which points to the next unprocessed itemset $D[index] \in D$ within the datasets starting at $1$. To process the itemset $D[index]$ means to check if there exists a non-prefix subsumed within $D$ of $D[index]$. We begin by atomically assigning the current value of $index$ to the variable $i$ and incrementing $index$; ensuring that every itemset in $D$ will be processed exactly once by some Thread-Functor. We then use the function Contains-Subset-Of from Algorithm~\ref{algo:Contains-Subset-Of} to check if a subset of $D[i]$ is found. Finally, we try to take a new unprocessed itemset from $D$ and process it in the same manner.

\subsection{Complexity}
Here we give the worst case time and space complexity of the functions presented in Algorithm~\ref{algo:AMS-Lex-Parallel}. From Bayardo~and~Panda~\cite{BayardoPanda11}'s complexity analysis of AMS-Lex we know that the worst case time complexity of AMS-Lex is equal to  $\BigO{ N }$ to identify the prefix subsumed itemsets and additional $\BigO{ N^2 / log(N) }$ to find the non-prefix subsumed ones; recall that $N$ denotes the sum of the cardinalities of all the sets in the input dataset $D$. Since in this section we showed that, the function Contains-Subset-Of requires only read-only access to the dataset $D$ and we have $P$ threads at our disposal we deduce that worst case execution time of the function Get-Minimal-Itemsets-Lex-Parallel is $\BigO{ N } + \BigO{ N^2 / (log(N) \times P) } = \BigO{ N + N^2 / (log(N) \times P) } $; note that $1 \leq P \leq n$. As for the space complexity of the Get-Minimal-Itemsets-Lex-Parallel algorithm it is equal to that of Get-Minimal-Itemsets-Lex~\cite{BayardoPanda11} which is proportional to the size of the input, i.e. $\BigO{ N }$.

\section{Experiments}
\label{sec:experiments}

Here we describe the experimental comparison of our algorithm with Bayardo and Panda's algorithm AMS-Lex for identifying the minimal itemsets within a dataset. We measure execution time speedup as the ratio of AMS-Lex algorithm execution time divided by our algorithm's execution time. Hence, a speedup of $2$ means that our algorithm executed in half the time, and a value of $1$ means that both algorithms have the same execution time. For every input, we also measure the total number of calls that each algorithm made to the subroutines $NextBeginRange$ and $NextEndRange$, because as described in \cite{BayardoPanda11}, these subroutines are the bottleneck of the AMS-Lex algorithm. In our experimental evaluation we provide a link between the decrease in the number of range searches performed by our algorithm in comparison to AMS-Lex and the relative to AMS-Lex execution time speedup.

Although not presented below, we also conducted experiments with the Bayardo and Panda's AMS-Card Algorithm on all of the data and it performed slower on all cases, compared to the AMS-Lex algorithm. That is expected, as stated by Bayardo and Panda \cite{BayardoPanda11}, the cardinality approach is faster then the lexicographic one mostly primarily in very obscure and rare cardinality distributions. Furthermore, the goal of this paper is to present faster than AMS-Lex methods of finding extremal sets that are based on Pritchard's lexicographic subsumption property from Theorem~\ref{th:lex}.

\subsection{Experimental Setup}
For all of our experiments we used a machine with four Intel Xeon CPU E7- 4820, each with eight cores, clocked at $2.00GHz$, a third level cache size of $18MB$ and $128GB$ of main memory. Note that our experiments investigate the case when the entire input fits in main memory. We used uniform random data as well as publicly available data as input to evaluate our two new algorithms and AMS-Lex. All of the results presented below are averaged over $3$ different runs.

\subsection{Real-World Data}

\begin{figure} [t]
	\centering
	\includegraphics[scale=0.5]{./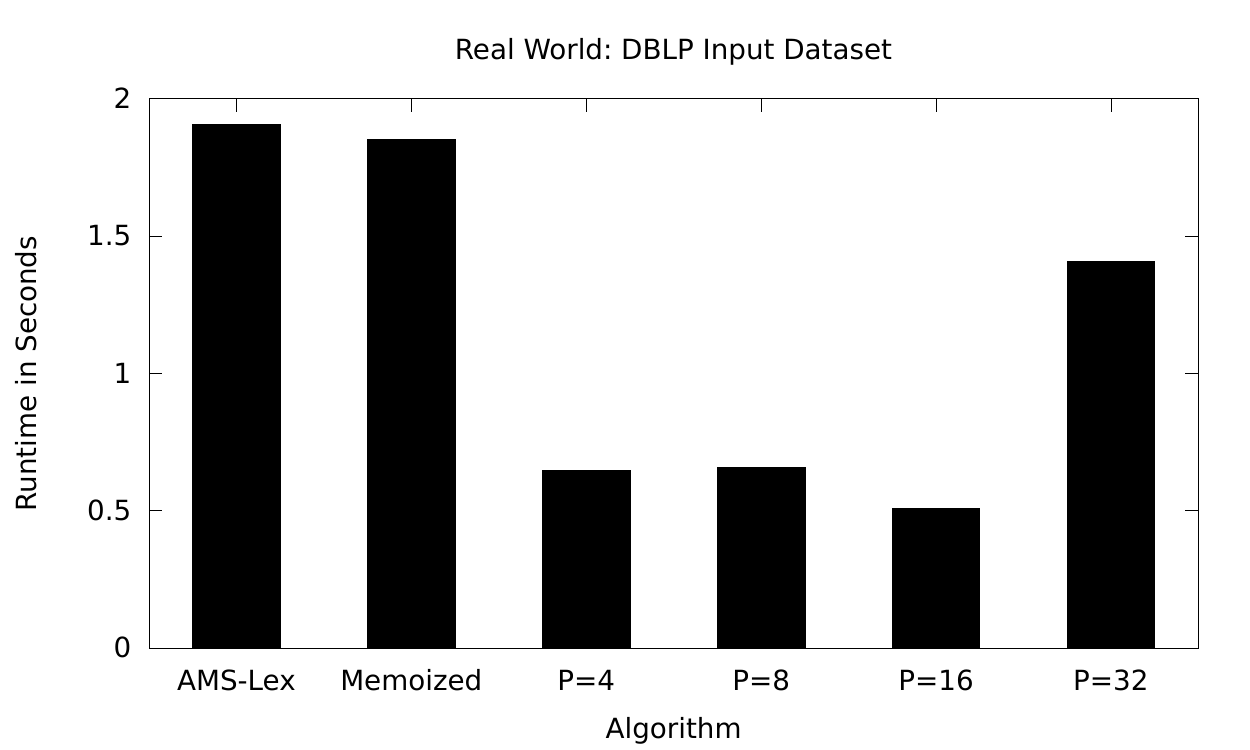}
	\includegraphics[scale=0.5]{./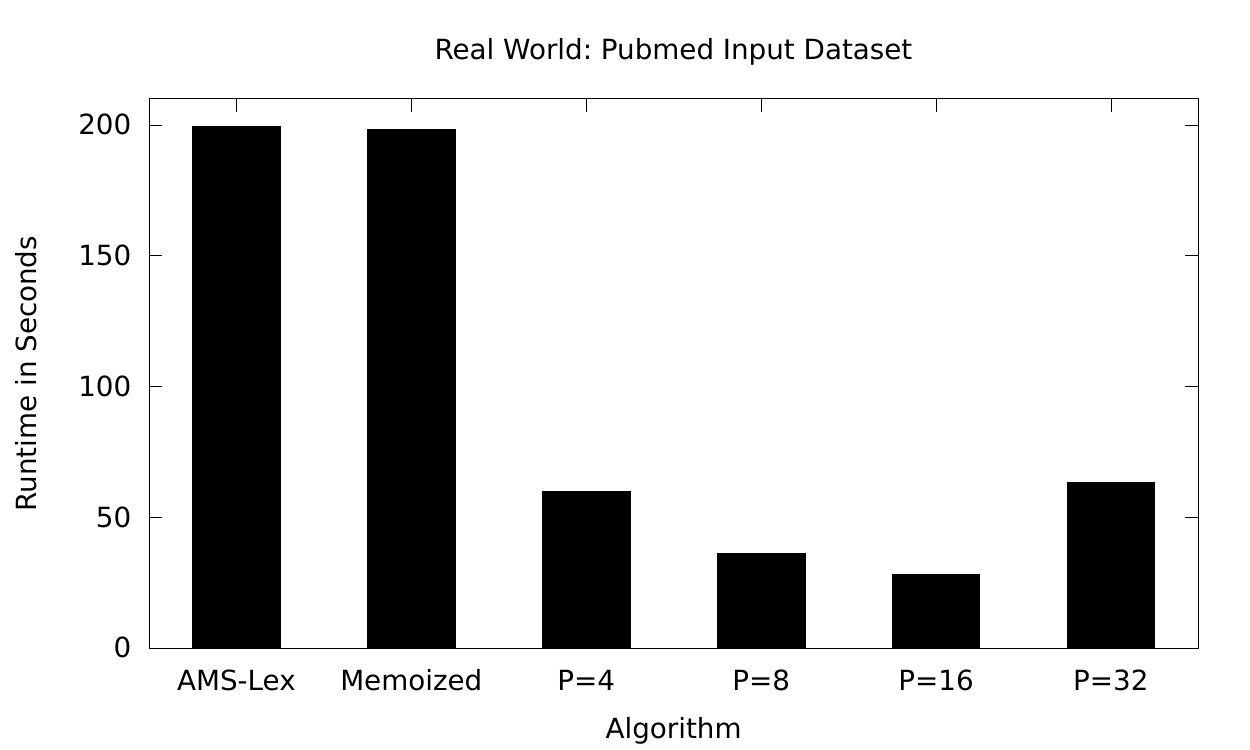}
	\includegraphics[scale=0.5]{./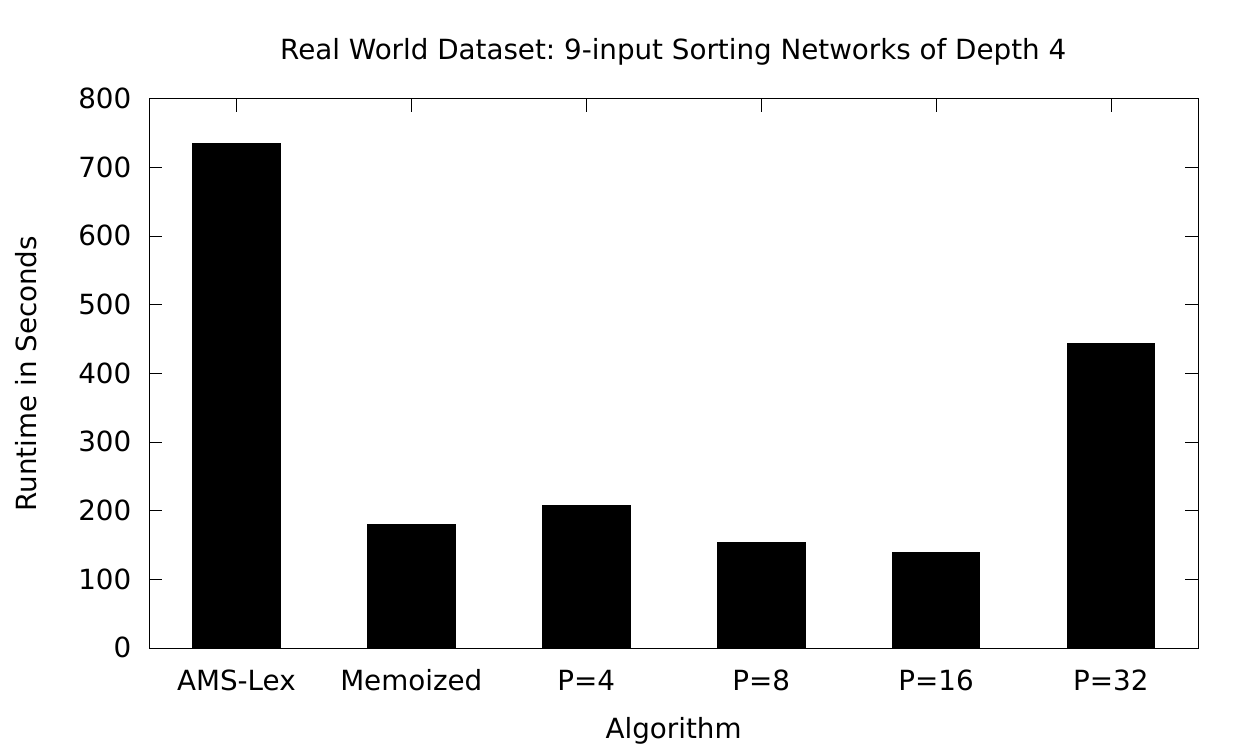}
	\includegraphics[scale=0.5]{./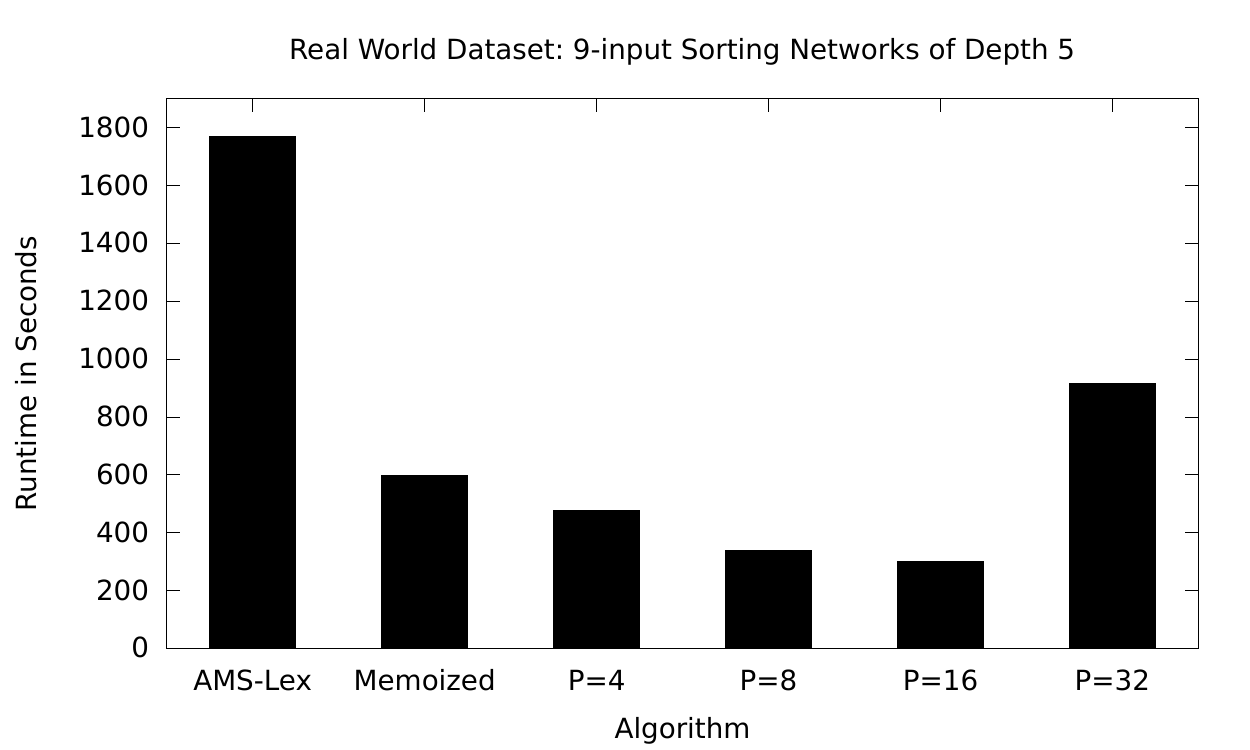}
	\caption{Experimental results using real world datasets, comparing AMS-Lex with the memoized (section~\ref{sec:memoized}) and parallel (section~\ref{sec:parallel}) approach for finding the minimal itemsets within a dataset. For these results we have used a machine with $32$ physical cores and used parallelism factors $P=4,8,16$ and $32$ for our parallel modification of AMS-Lex.}
	\label{fig:exp:real}
\end{figure}

A summary of the conducted experiments using real-world input datasets is presented in Figure~\ref{fig:exp:real}. We have evaluated the AMS-Lex algorithm, our memoized approach and the parallel method using different degrees of parallelism over the real-world datasets:

\begin{itemize}
\item \textbf{PubMed} dataset represents significant terms in the PubMed abstract. It consists of $8$ million itemsets stored in a $2GB$ file.
\item \textbf{DBLP} dataset consists of $1$ million itemsets and is used in the area of similarity joins. The file size is $50MB$.
\item \textbf{SN\_9\_4} dataset consists of $2$ million itemsets with an average size of $30.3$ and an alphabet size of $2^9$. This data is derived from the domain of $9$-input sorting networks by generating all non maximal networks of depth $4$ . The file size is $252MB$.
\item \textbf{SN\_9\_5} dataset consists of $7.5$ million itemsets with an average size of $18.1$ and an alphabet size of $2^9$. This data is derived from the domain of $9$-input sorting networks by generating all non maximal networks of depth $5$ by using the minimal ones of depth $4$. The file size is $578MB$.
\end{itemize}

\paragraph{Sorting Networks Datasets}

Here we give explanation on how the datasets $SN\_9\_4$, $SN\_9\_5$ were generated. We refer to the work of Bundala~\textit{et al.}~\cite{BundalaCCSZ14_Optimal_Depth} (Lemma 2 in Section 3.2) about searching for sorting networks of optimal depth. They describe a method of reducing the search space by considering `output-minimal networks' i.e. given a dataset their algorithm needs to identify and consider only the minimal representative itemsets within this dataset. The input dataset $SN\_9\_4$ is generated by applying all maximal network levels to the minimal outputs (itemsets) of networks of depth three; similarly the dataset $SN\_9\_5$ is generated by taking the minimal networks of depth four and applying all maximal network levels. 

The algorithm described in this paper is originally designed to find such output-minimal networks and hence it is aimed at finding the minimal itemsets within a dataset and not the maximal ones as per Bayardo and Panda's approach \cite{BayardoPanda11}. In the background related Section~\ref{sec:background} we describe in detail Bayardo and Panda's AMS-Lex algorithm in terms of finding the minimal itemsets. Bayardo and Panda note that AMS-Lex can be used for finding the minimal and maximal itemsets and that the changes needed to do one or the other are trivial. We chose to work in terms of finding the minimal itemsets within a dataset because our algorithm (and source code) is initially build for tackling the sorting networks related datasets.

\subsubsection{Memoized vs AMS-Lex}

Figure~\ref{fig:exp:real} shows a comparison of the original AMS-Lex and our two modified versions for real world datasets. For the $DPLP$ and $PubMed$ datasets the memoized approach is marginally faster than the AMS-Lex algorithm because there are very few itemset pairs that share a common prefix. On the other hand, for the $SN\_9\_4$ dataset the memoized algorithm is $4.06$ times faster than AMS-Lex; and $2.96$ times faster for the $SN\_9\_5$ dataset. The sorting network input datasets tend to share long common prefixes as the size of the alphabet is very small compared to the size of the input which favours our memoization technique over AMS-Lex. It is important to note that in the sorting network datasets there are no trivially subsumed itemsets.

\subsubsection{Parallel vs AMS-Lex}

Note that our parallel algorithm is executed on a machine with $32$ physical cores and all real-world experimental results are presented in Figure~\ref{fig:exp:real}. For the $DBLP$ dataset we see that the speedup of the parallel algorithm over AMS-Lex is about $3.5$ for degrees of parallelism $P = 4, 8$ and $16$ whereas for $P=32$ we see a reduced speedup. For the $PubMed$ dataset we see substantial speedup for all of the parallelism factors with $P=16$ executing $5.6$ times faster than AMS-Lex. Substantial execution time speedups are evident in the $SN\_9\_4$ and $SN\_9\_5$ datasets both of them peeking at $P=16$ with maximum speedup factors of $5.3$ and $5.9$ respectively. We elaborate more on the explanation of the performance differences between the parallel algorithm and AMS-Lex in Section~\ref{sec:exp:synth:prallel_AMS}. It is important to note that these real-world data execution time speedups are comparatively equal and/or better than the ones that \cite{Fort+13}'s approach achieves over the AMS-Lex algorithm. Hence, we conclude that our parallel version of AMS-Lex is faster than original AMS-Lex on real-world data and competitive with the implementation in \cite{Fort+13}.

\subsection{Synthetic Data}
\label{sec:exp:synth}

\subsubsection{Input Dataset Generation}
We now describe the process of generating random input data using a random data generator program $g(n, d, f_{min})$. The input to the generator is the number of itemsets $n$, the number of distinct items $d$ in the alphabet and the minimal item frequency $f$. Then for each of the $d$ items we choose a frequency $f_i$ from the range $[f_{min}, 1]$ which indicates the number of itemsets which contain this item. Then we insert this item to a set of randomly chosen $ \lfloor f_i \times n \rfloor $ itemsets. Then we use Bayardo and  Panda's open source implementation to sort the input data in the format required by the algorithms. Note that the higher the value of the minimal frequency $f_{min}$ the greater the probability that two itemsets will share a common prefix. We use the value of $f_{min}$ to evaluate our hypothesis that our algorithm is faster than AMS-Lex on inputs consisting of itemsets sharing large common prefixes.

\subsubsection{Memoized vs AMS-Lex}
\label{sec:exp:synth:memo_AMS}

\begin{figure} [t]
	\centering
	\includegraphics[scale=0.5]{./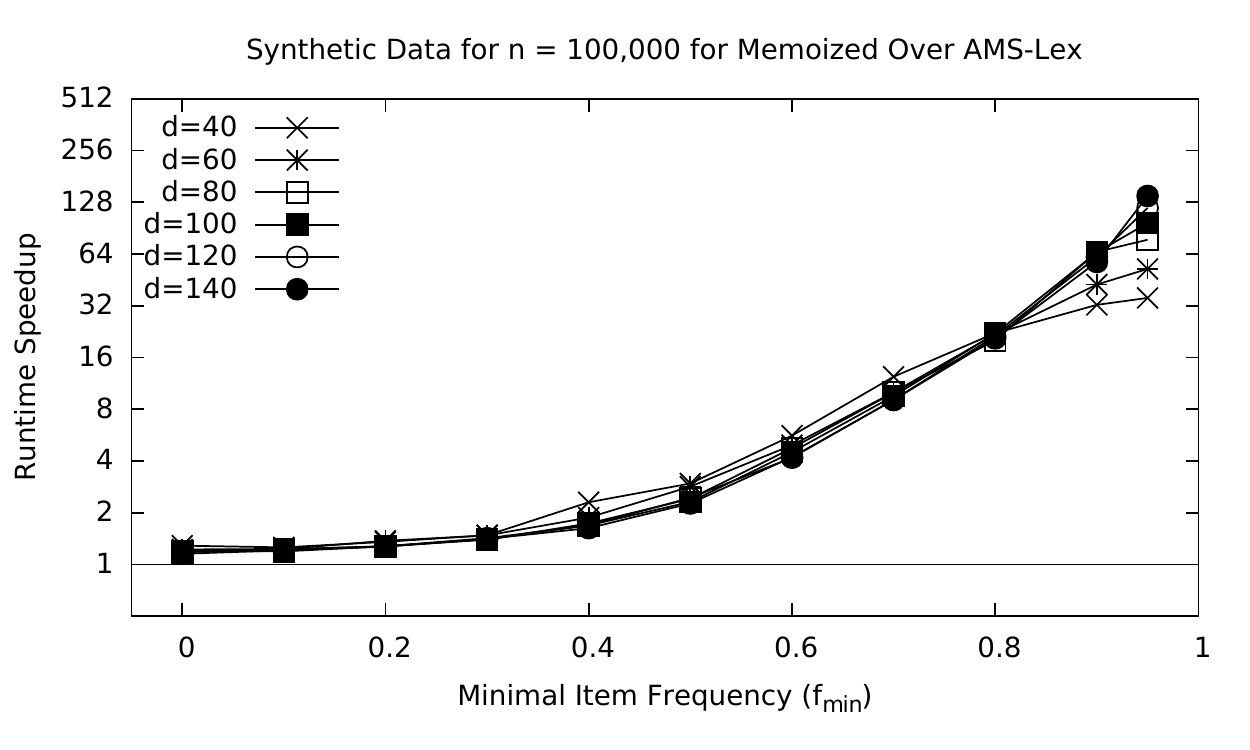}
	\includegraphics[scale=0.5]{./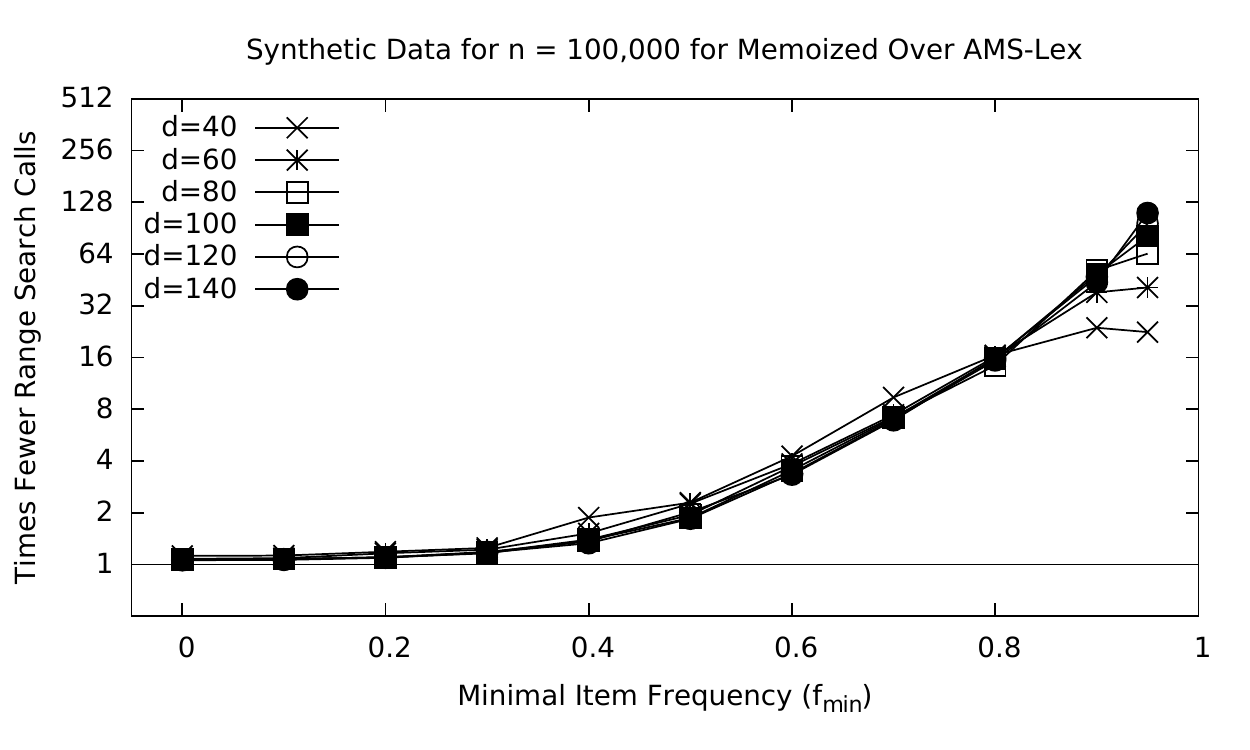}
	\includegraphics[scale=0.5]{./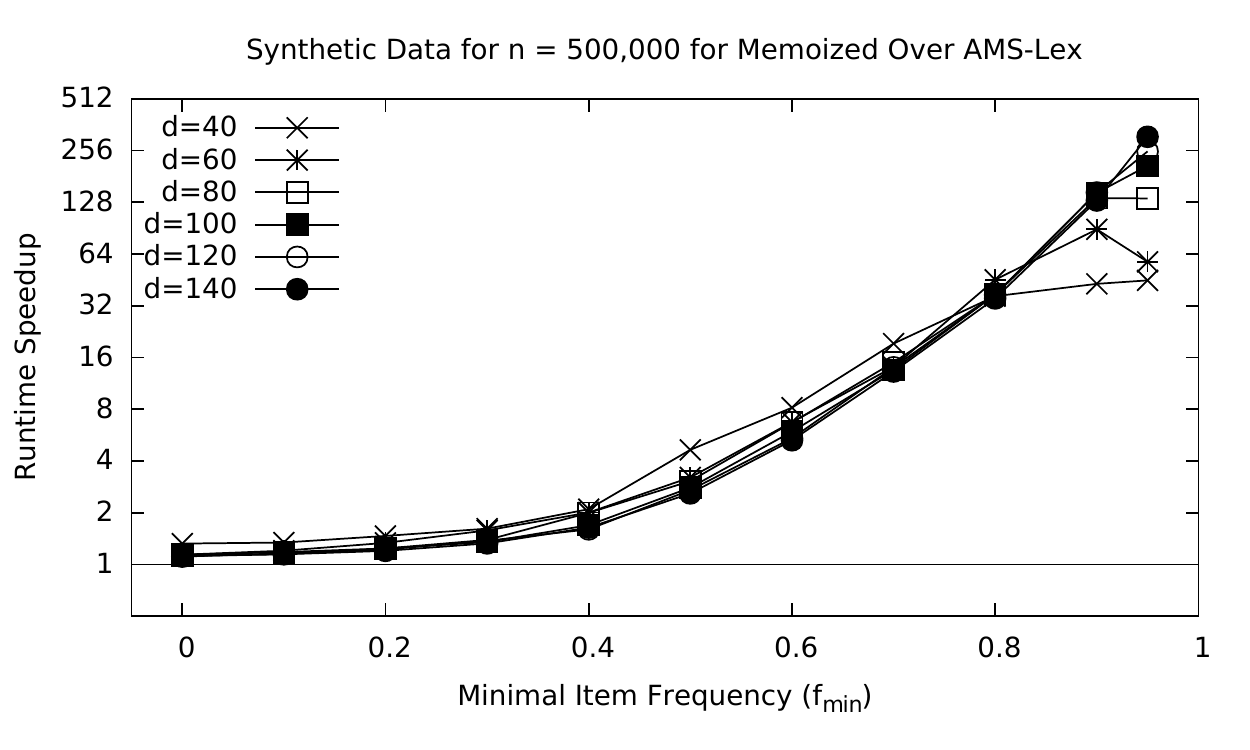}
	\includegraphics[scale=0.5]{./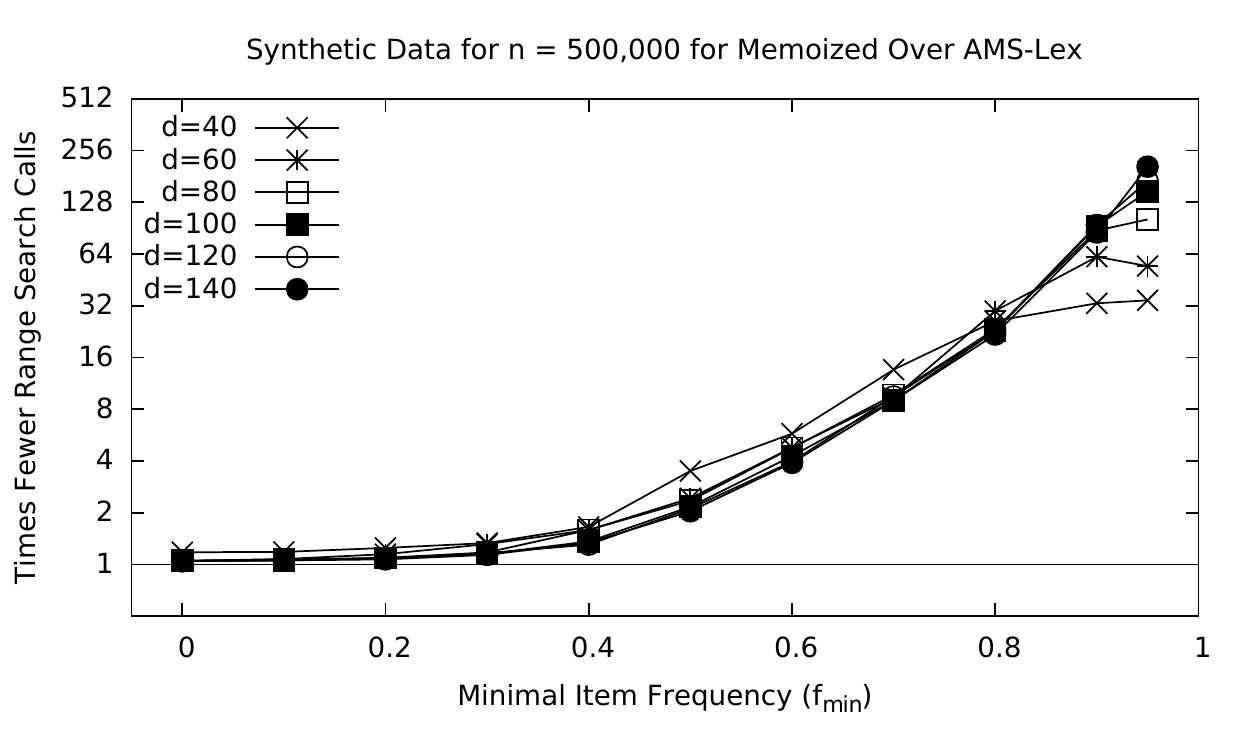}
	\includegraphics[scale=0.5]{./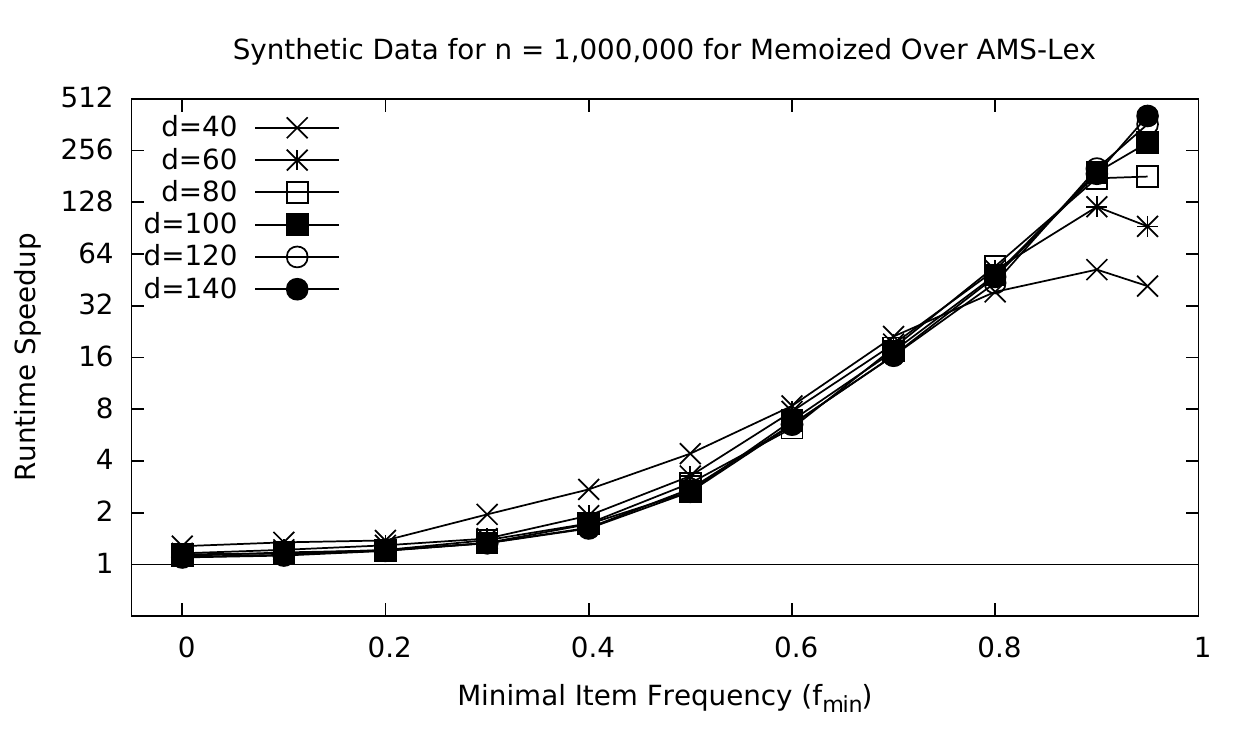}
	\includegraphics[scale=0.5]{./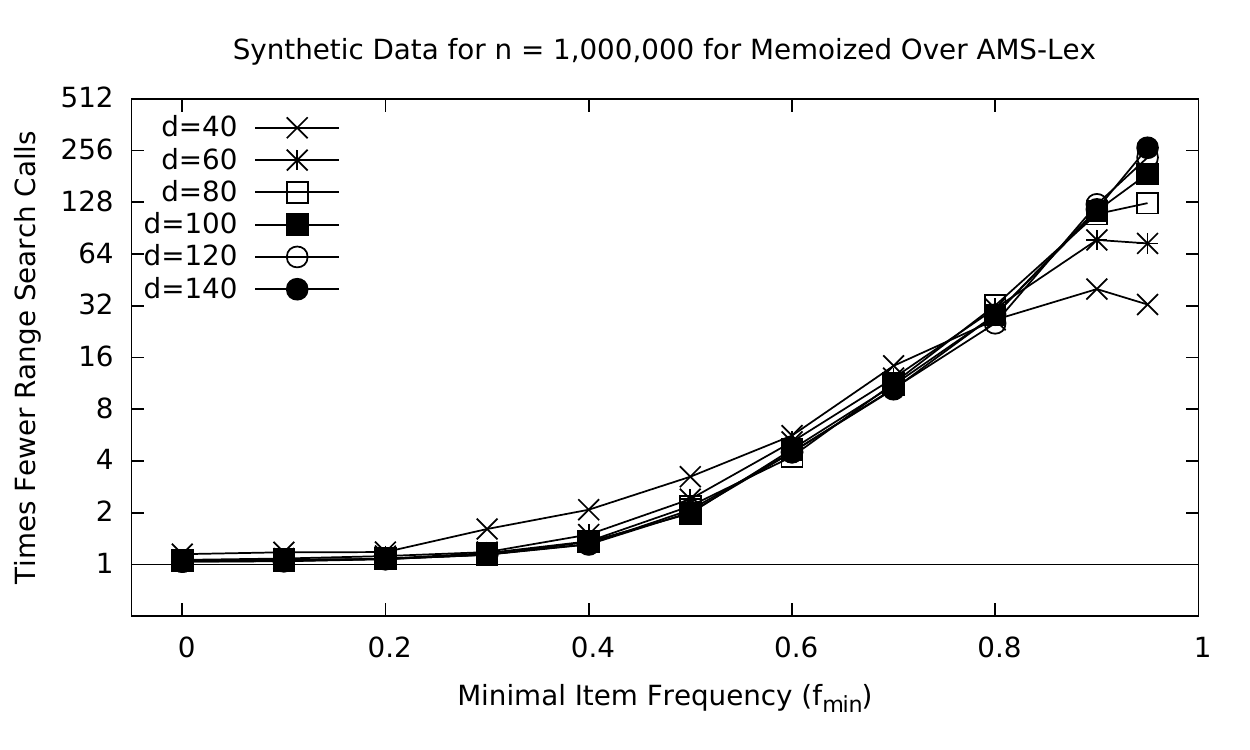}
	\caption{Experimental results using synthetic data for $n = 100\,000$, $n = 500\,000$ and $n = 1\,000\,000$ of comparing our memoized version of AMS-Lex (section~\ref{sec:memoized}) over AMS-Lex for finding the minimal itemsets within a dataset. Here $d$ is the cardinality of the domain of the itemsets. These results show the minimal item frequency ($f_{min}$) described in Section~\ref{sec:experiments} against the resulting execution time speedup as well as the decrease in range search calls of our memoized algorithm compared over AMS-Lex. Note that the y-axis in every graph uses a $\log_2$ scaling for visual clarity of the presented graphs. }
	\label{fig:exp:memoized}
\end{figure}

Figure~\ref{fig:exp:memoized} shows the execution time speedup factor of our memoized algorithm over AMS-Lex for datasets consisting $n = 100\,000$, $n = 500\,000$ and $n = 1\,000\,000$ itemsets with alphabet size of $40$, $60$, $80$, $100$, $120$ and $140$. We notice that as the minimal item frequency increases, the speedup factor increase drastically. The maximum execution time speedup factor of $406$ is achieved by a dataset consisting of $N = 1\,000\,000$ itemsets with alphabet size of $D = 140$ and minimal frequency of $F = 0.95$. We also note that there is an approximately constant correlation between the execution time speedup of our algorithm and the factor of reduction in range search calls. That is an expected correlation because these low level subroutines are described as the bottleneck of AMS-Lex~\cite{BayardoPanda11}. 

In Section~\ref{sec:memoized} we showed that the more common prefixes that itemsets have, i.e. as $f_{min}$ increases and we keep $n$ and $d$ fixed, the bigger the expected speedup factor, which is experimentally verified by this figure. We note that fixing the size of the alphabet $d$ and the minimal item frequency $f_{min}$, in Figure~\ref{fig:exp:memoized} we see that as the number of itemsets $n$ increases, the execution time speedup of the memoized algorithm over AMS-Lex increases. Also, if we fix $n$ and $f_{min}$ we see that as $d$ increases the execution time speedup is non-decreasing in all of the conducted experiments.

\begin{figure} [t]
	\centering
	\includegraphics[scale=0.7]{./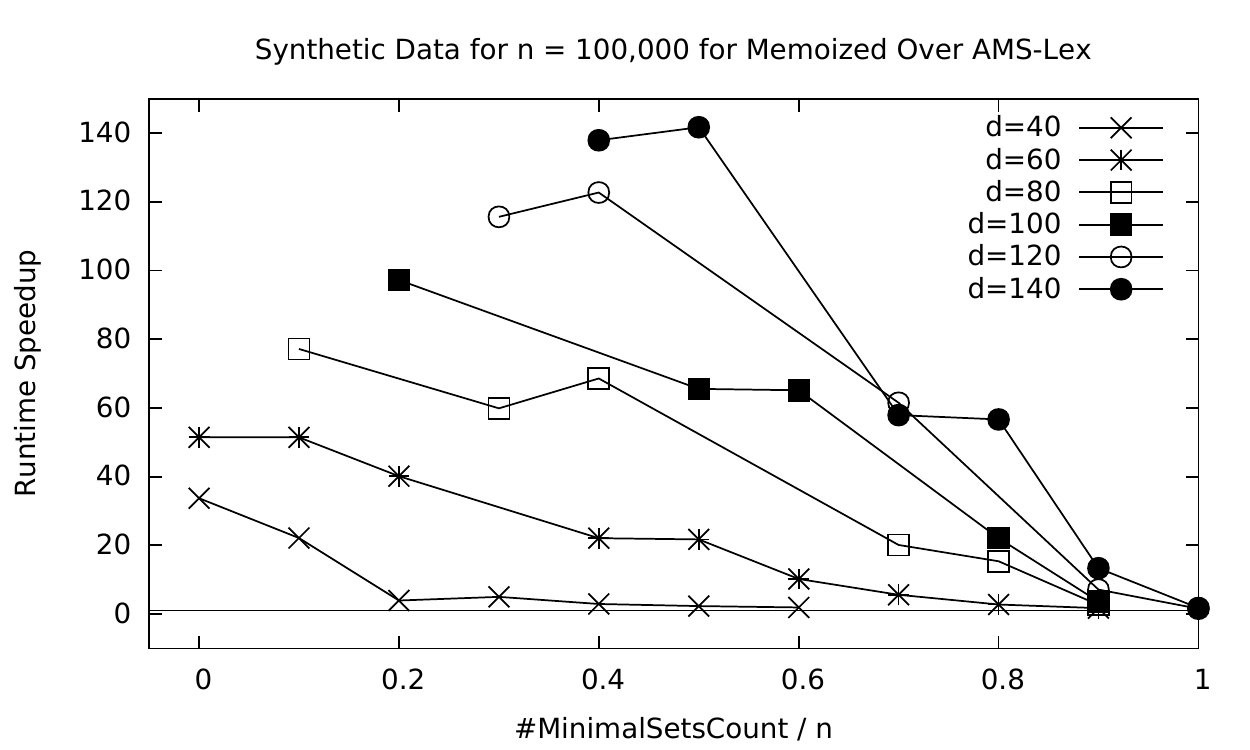}
	\includegraphics[scale=0.7]{./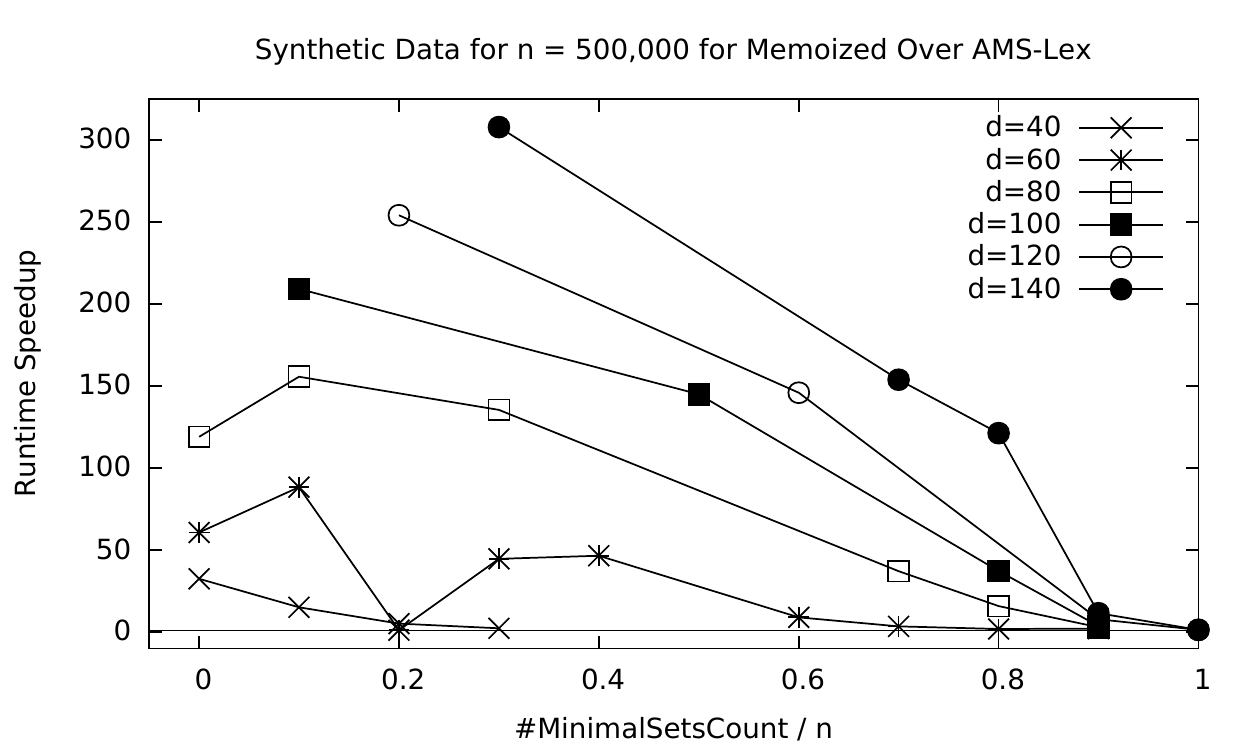}
	\includegraphics[scale=0.7]{./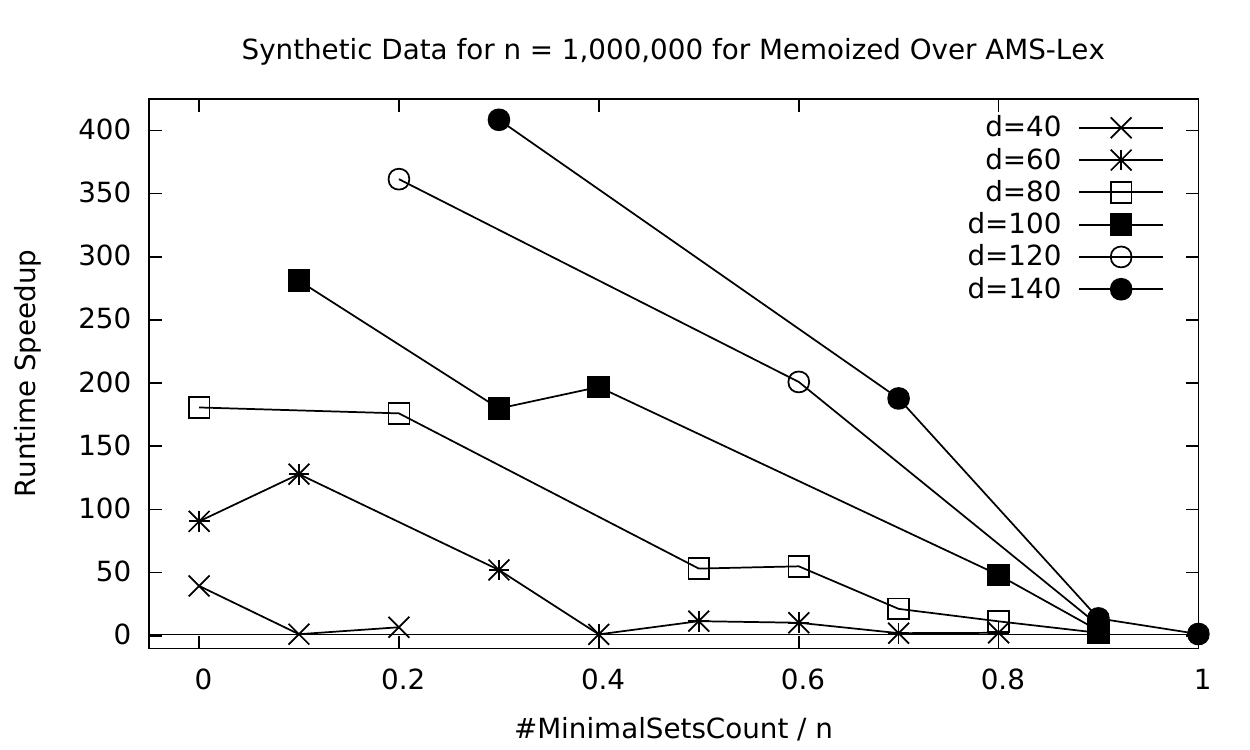}
	\caption{Experimental results using synthetic data for $n = 100\,000$, $n = 500\,000$ and $n = 1\,000\,000$ of comparing our memoized version of AMS-Lex (section~\ref{sec:memoized}) over AMS-Lex for finding the minimal itemsets within a dataset. Here $d$ is the cardinality of the alphabet. These results show the number of minimal itemsets against the resulting execution time speedup of our memoized algorithm compared to AMS-Lex.}
	\label{fig:exp:memoized:res_count}
\end{figure}

Another interesting summary of our experiments is shown in Figure~\ref{fig:exp:memoized:res_count} which gives the execution time speedup with respect to the cardinality of the resulting minimal itemsets by presenting three different graphs for $n = 100\,000$, $n = 500\,000$ and $n = 1\,000\,000$. Our first impression is that all of the graphs look very similar to each other besides the scale of the execution time speedup access. Our second observation shows that the largest speedups are almost always achieved at the smallest resulting minimal sets count for every $d$ and $n$. Moreover, as $d$ increases the absolute maximum speedup increases as well and all speedups tend to $0$ when the size of the result is close to the size of the input ($0.9$ to $1.0$). Reading the graphs in Figures~\ref{fig:exp:memoized}~\ref{fig:exp:memoized:res_count} we deduce that there is a correlation between the minimal item frequency $f_{min}$ and the resulting minimal sets count --- as $f_{min}$ increases the number of minimal sets decreases. Hence, in Figure~\ref{fig:exp:memoized:res_count} we observe that as the number of minimal sets increases the speedup decreases; and in Figure~\ref{fig:exp:memoized} we see that as $f_{min}$ increases the speedup increases.

\subsubsection{Parallel vs AMS-Lex}
\label{sec:exp:synth:prallel_AMS}

\begin{figure} [t]
	\centering
	\includegraphics[scale=0.5]{./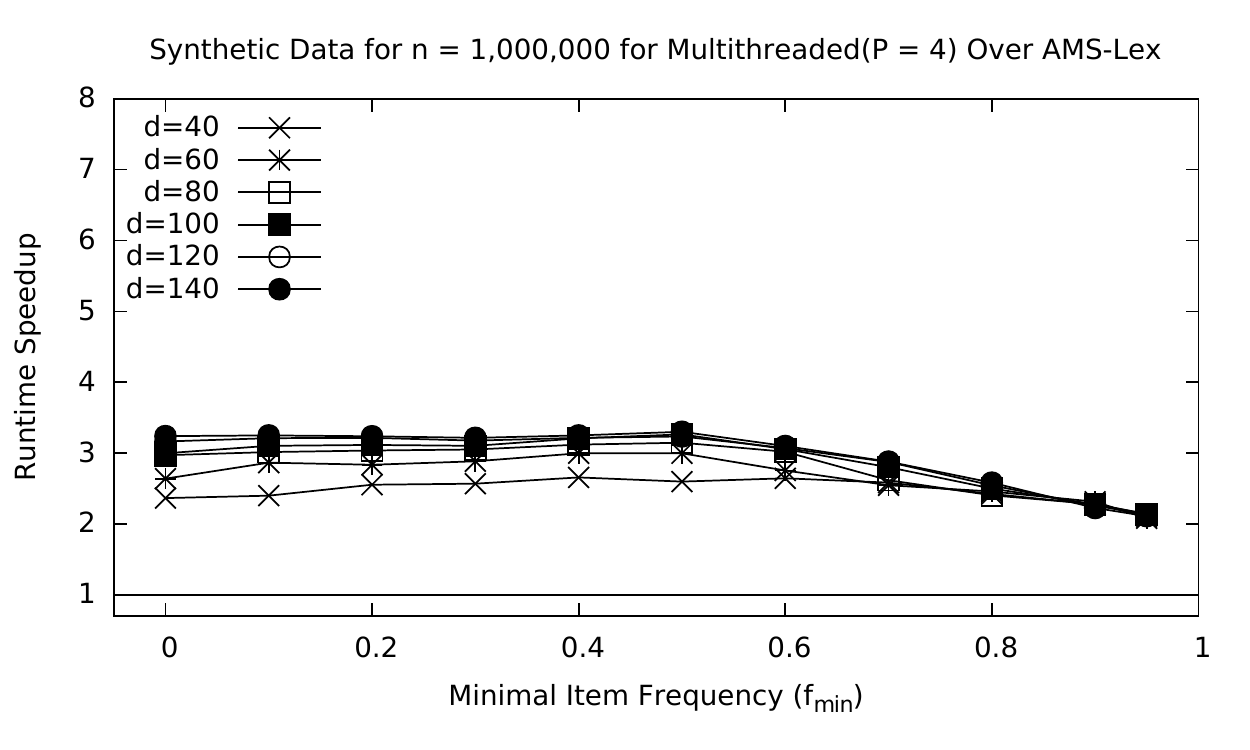}
	\includegraphics[scale=0.5]{./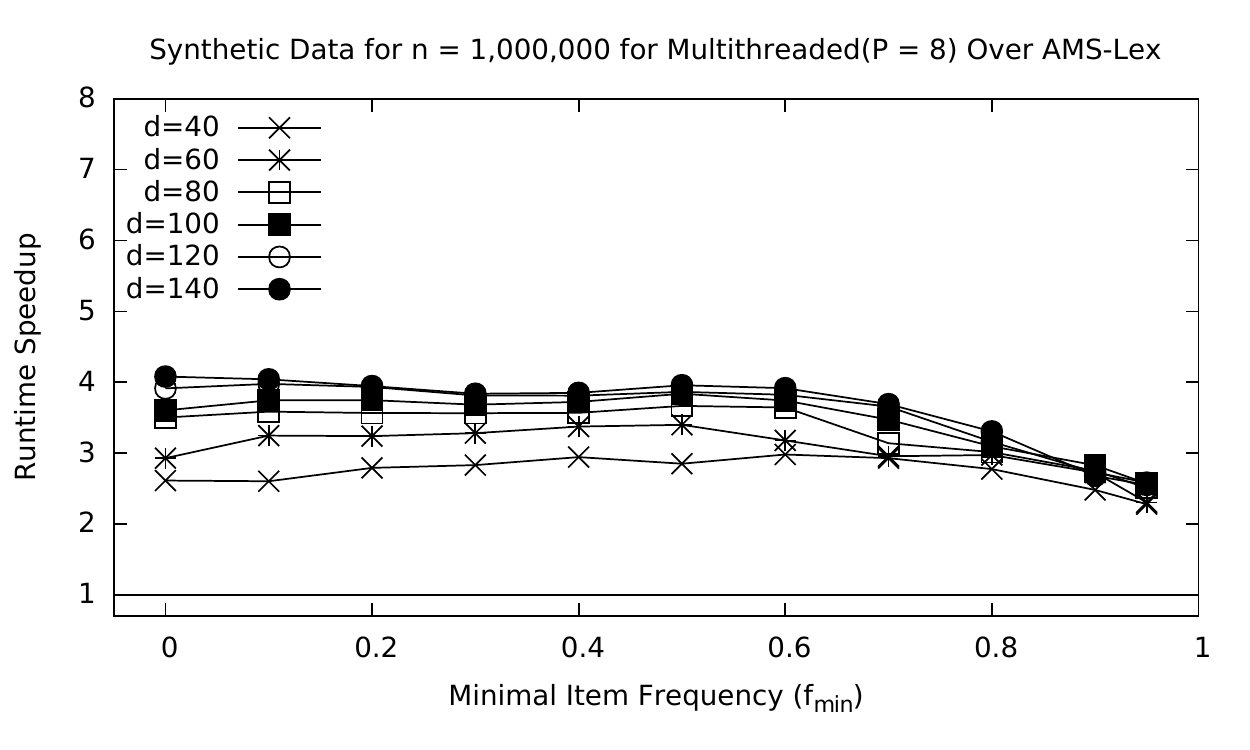}
	\includegraphics[scale=0.5]{./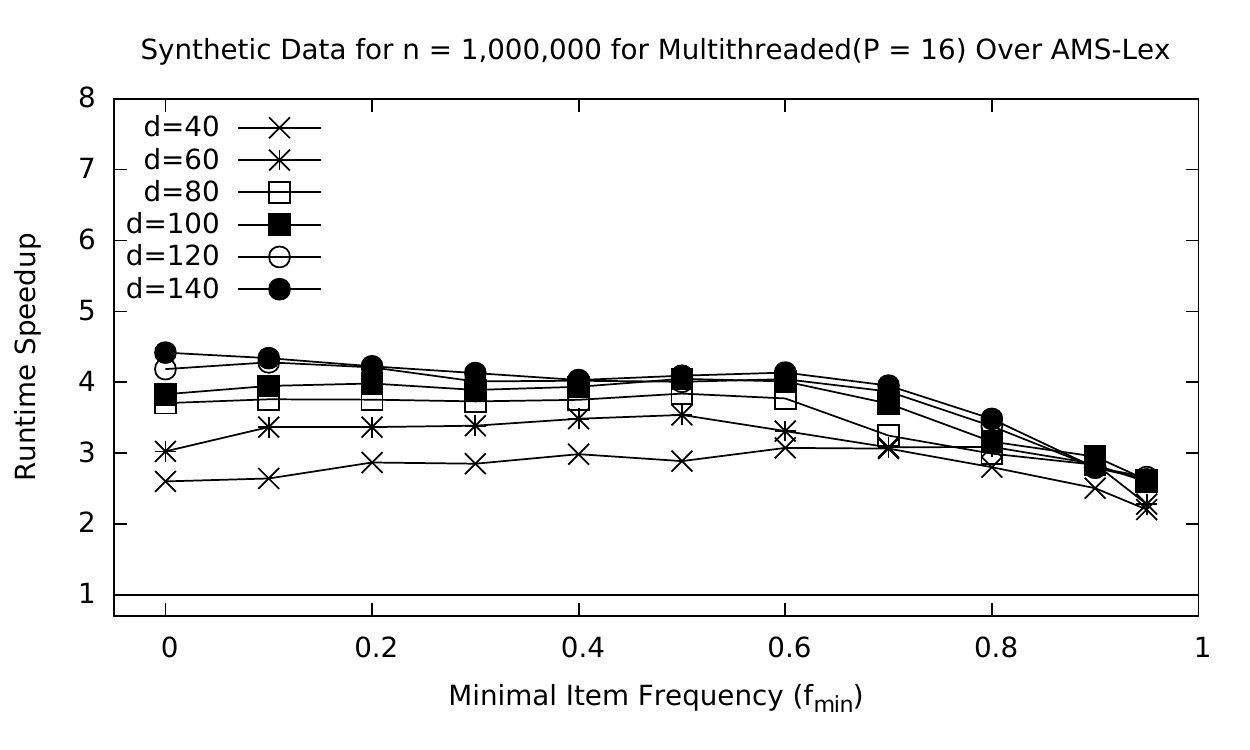}
	\includegraphics[scale=0.5]{./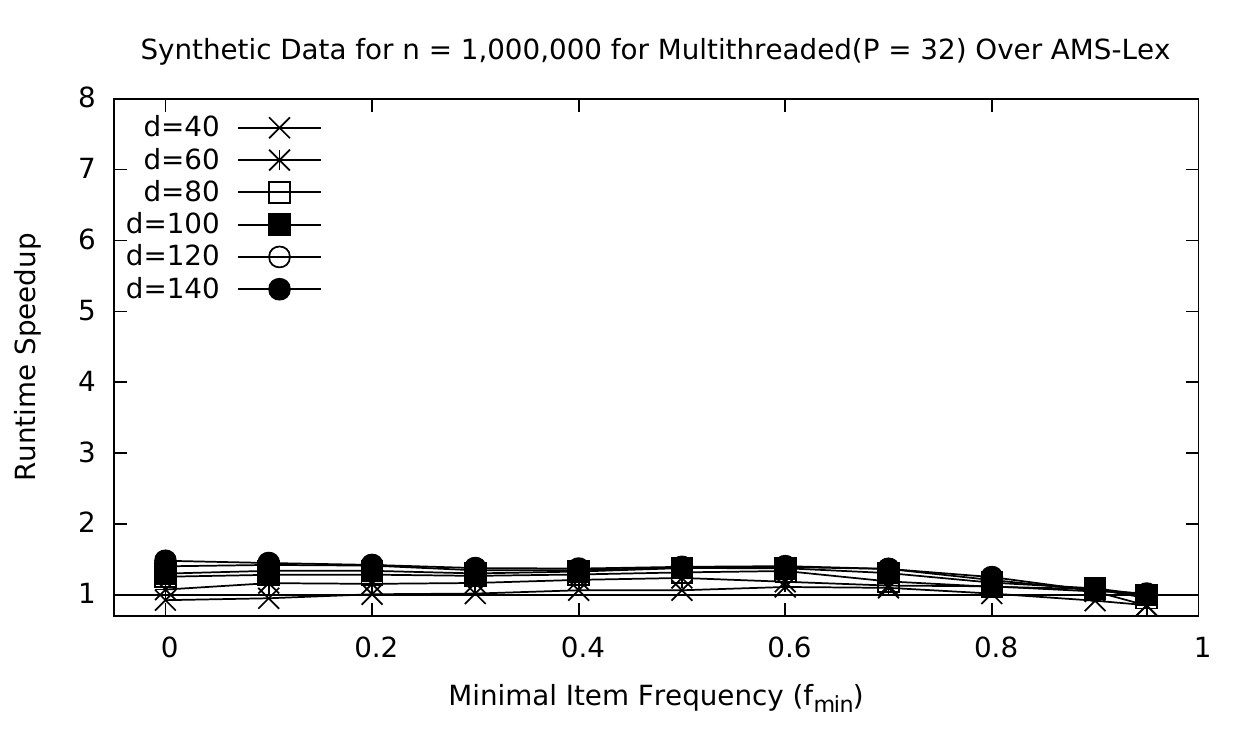}
	\caption{Experimental results for synthetic data for $n = 1\,000\,000$ of comparing our parallel version of AMS-Lex over AMS-Lex for finding the minimal itemsets within a dataset. Here $d$ is the cardinality of the domain of the itemsets. These results show the minimal item frequency described in Section~\ref{sec:experiments} against the resulting execution time speedup of our parallel algorithm compared to AMS-Lex. For these results we have used a machine with $32$ physical cores and used parallelism factors of $4$, $8$, $16$ and $32$ for our parallel modification of AMS-Lex described in section~\ref{sec:parallel}.}
	\label{fig:exp:parallel}
\end{figure}

We have summarised the conducted experiments in Figure~\ref{fig:exp:parallel} which presents the execution time speedup of the parallel algorithm over AMS-Lex using degrees of parallelism $P = 4, 8, 16$ and $32$ on a machine with $32$ physical cores. As input to the algorithm we used datasets with $n = 1\,000\,000$ itemsets with alphabet size of $40$, $60$, $80$, $100$, $120$ and $140$; note that these datasets are the same as the ones used for experimentally comparing the memoized approach versus AMS-Lex consisting of one million itemsets. From the figure, we see that as $d$ increases and keeping $n$ and $f_{min}$ fixed we see that the execution time speedup increases, but it does tend to reach maximum unlike the analogous comparison of memoized over AMS-Lex. We note very small difference in the speedups with $P = 8$ and $P = 16$, whereas as they are both slightly larger then the speedups achieved using $4$ threads.

It is very interesting and important to note that in the case of $P = 32$ we have a significant decay in the speedup over AMS-Lex in comparison to $P = 4, 8$ and $16$. Also, this is the only example we encountered that any of our algorithms is even by a very small amount slower (speedup smaller than $1$ on the graphs) than AMS-Lex. That is explained with the fact that the AMS-Lex algorithm and all of its variations presented here are not computationally intensive but rather memory read access bounded. In this case when $P$ equals the number of physical cores, we found more L3 cache misses in comparison to smaller parallelism factors $P$; also there is a competition for the memory bus and as $P$ increases we inevitably hit the limit of the bus. The cache locality and the memory insensitivity of the application arguments also explains the observed maximum speedups of around $4$ because the machine we used consists of $4$ physical CPU chips, each with its own L3 cache.

\subsubsection{Comparison to Fort et al. GPU Approach}

Fort et al. algorithm for finding extremal sets on a GPU is compared to the AMS-Lex algorithm in \cite{Fort+13}. By carefully analysing the experimental comparison of Fort's algorithm to AMS-Lex, we see that when we exclude the time to pre-process and sort the input dataset to the required format by AMS-Lex then Fort et al. algorithm is between $4$ and $5$ times faster than AMS-Lex when evaluated on synthetic data. Moreover, the execution time speedup demonstrated by the Fort et al. algorithm seems to be constant over AMS-Lex. As presented in Figure~\ref{fig:exp:parallel}, our parallel algorithm is between $3$ and $4.5$ times faster than AMS-Lex when executed with $P = 16$ on a $32$ core machine which is similar to the speedup of Fort et al. algorithm over AMS-Lex. One the other hand, the speedup of our memoized approach over AMS-Lex is not bounded above by a constant as demonstrated. The execution time speedup of our memoized method for datasets with $1\,000\,000$ itemsets over AMS-Lex is as high as $400$ which is much bigger than any speedup reported by Fort et al.~\cite{Fort+13} over AMS-Lex.

\section{Conclusion}
\label{sec:conclusion}

This paper has presented two improved algorithms for identifying extremal sets within a dataset. We have experimentally demonstrated that both techniques improve the performance of the AMS-Lex algorithm on both real world and synthetic datasets. Our first improved algorithm uses memoization to remove redundant work from the AMS-Lex~\cite{BayardoPanda11} requiring at most twice the memory of AMS-Lex. In a subset of the conducted experiments the memoized algorithm executes more than $400$ times faster than AMS-Lex. We show in theory and practice, that the efficiency of this improved algorithm increases as the common prefixes shared by itemsets increases, hence the speedup when compared to AMS-Lex is not bounded above by a constant which is also evident in the experiments provided. The second improved algorithm uses parallelism to speedup the AMS-Lex algorithm. In the conducted experiments we show that our parallel approach outperforms Bayardo and Panda's implementation of AMS-Lex on both real-world and synthetic datasets. Our parallel approach is competitive with Fort~et~al.'s approach running on a highly parallel GPU.

\section{Acknowledgements}
Work supported by the Irish Research Council (IRC) and Science Foundation Ireland grant 12/IA/1381.


\bibliographystyle{elsarticle-num}
\bibliography{Manuscript}

\end{document}